\newtheorem{theorem}{Theorem}
\newtheorem{example}[theorem]{Example}
\newtheorem{remark}[theorem]{Remark}
\newtheorem{definition}[theorem]{Definition}
\DeclareMathOperator{\Q}{\mathbf{Q}}
\DeclareMathOperator{\s}{\mathbf{s}}
\DeclareMathOperator{\x}{\mathbf{x}}
\DeclareMathOperator{\y}{\mathbf{y}}
\DeclareMathOperator{\w}{\mathbf{w}}
\DeclareMathOperator{\G}{\mathbf{G}}
\DeclareMathOperator{\C}{\mathbf{C}}
\DeclareMathOperator{\D}{\mathbf{D}}
\DeclareMathOperator{\J}{\mathbf{J}}
\DeclareMathOperator{\F}{\mathbf{F}}
\DeclareMathOperator{\FF}{\mathfrak{F}}
\DeclareMathOperator{\A}{\mathbf{A}}
\DeclareMathOperator{\V}{\mathbf{V}}
\begin{document}
\title{Graph Fourier Transform Enhancement through Envelope Extensions
 }
\author{Ali Bagheri Bardi, Taher Yazdanpanah, Milo\v{s} Dakovi\'{c}, Ljubi\v{s}a Stankovi\'{c}
\thanks{
Ali Bagheri Bardi (bagheri@pgu.ac.ir) is with Persian Gulf University, Bushehr, Iran, Taher Yazdanpanah (yazdanpanah@pgu.ac.ir) is with Persian Gulf University, Bushehr, Iran, Milo\v{s} Dakovi\'{c} (milos@ucg.ac.me) and Ljubi\v{s}a Stankovi\'{c} (ljubisa@ucg.ac.me) are with University of Montenegro, Podgorica, Montenegro. \par
This work is partially supported by the Montenegrin Ministry of Education, Science and Innovations, project grant: DPG \textquotedblleft Data Processing on Graphs\textquotedblright, and by the Montenegrin Academy of Sciences and Arts.
}
}

\maketitle

\begin{abstract}
Many real-world networks are characterized by directionality; however, the absence of an appropriate Fourier basis hinders the effective implementation of graph signal processing techniques.
 Inspired by discrete signal processing, where embedding a line digraph into a cycle digraph facilitates the powerful Discrete Fourier Transform for signal analysis, addressing the structural complexities of general digraphs can help overcome the limitations of the Graph Fourier Transform (GFT) and unlock its  potential.

The Discrete Fourier Transform (DFT) serves as a Graph Fourier Transform for both  cycle graphs and Cayley digraphs on  the finite cyclic  groups $\mathbb{Z}_N$. We propose a systematic method to identify a class of   such Cayley digraphs  that can encompass a given directed graph. By embedding  the directed graph into these Cayley digraphs and opting for envelope extensions that naturally support the Graph Fourier Transform, the GFT functionalities of these extensions can be harnessed for signal analysis.

Among the potential envelopes, optimal performance is achieved by selecting one that meets key properties. This envelope's structure closely aligns with the characteristics of the original digraph. The Graph Fourier Transform  of this envelope is reliable in terms of numerical stability, and its columns approximately form an eigenbasis for the adjacency matrix associated with the original digraph.

It is shown that the envelope extensions possess a convolution product, with their GFT fulfilling the convolution theorem. Additionally, shift-invariant graph filters (systems)  are described as the convolution operator, analogous to the classical case. This allows the utilization of  systems for signal analysis.

\end{abstract}

\begin{IEEEkeywords}
	 Graph signal processing, Graph Fourier transform,  Graph filters, Cayley digraphs
\end{IEEEkeywords}

\section{Introduction}
Graph Signal Processing (GSP) is an expanding field focused on the analysis and manipulation of data residing on graphs. GSP finds diverse applications in areas such as social network analysis, sensor networks, biological networks, recommendation systems, image and video processing, brain network analysis, transportation networks, and communication systems. 
\cite{seifert2021digraph,shuman2013emerging,sandryhaila2013discrete,ortega2018graph,
sandryhaila2014big,dong2020graph,graph2020graph, isufi2024graph,li2021graph,Sig_Graphs2023,Graphs_1,trends1,
trends2, shi2019graph,huang2018graph,chen2020simple,LS_Graph}. By leveraging the underlying graph structure, GSP enables more effective data analysis and signal processing, leading to improved performance and insights across these domains.

The principal challenge in Graph Signal Processing revolves around the development of methodologies to effectively analyze and process signals within the context of graph structures. The core objective of this discourse is proposing an approach to explore and address this central challenge on directed graphs, with a primary emphasis on devising strategies that harness and exploit the inherent properties and characteristics of the underlying graph structure. This involves developing algorithms that can accommodate the directionality of edges, which are critical in many applications such as traffic flow analysis, social network dynamics, and information propagation.

Early methods in GSP often relied on spectral graph theory, which deals with the  eigen-decomposition techniques from linear algebra  \cite{domingos2020graph,sevi2023harmonic,sardellitti2017graph, seifert2023causal,seifert2021digraph,shafipour2018digraph,furutani2020graph,yang2021graph}. The key idea is to use the eigenvectors  of graph Laplacians or adjacency matrices to define transformations and operations on graph signals. Spectral methods enable tasks such as filtering, clustering, and classification by exploiting the spectral properties of the graph.

Despite their theoretical appeal, spectral methods face significant limitations when applied to directed graphs. One major challenge stems from the non-diagonalizability of adjacency matrices in many directed graphs, which complicates their eigen value decomposition for spectral analysis. Unlike undirected graphs, the adjacency matrices of directed graphs frequently yield non trivial Jordan blocks in their Jordan normal form. In addition of numerical instability, this characteristic complicates the application of shift-invariant filters, a challenge well articulated in 
 \cite{sandryhaila2013discrete,sandryhaila2014discrete}. The limitation is even more significant in large-scale graphs or in dynamic graph settings where the structure of the graph changes over time.

To ensure effective progression in our discourse, we  establish a convention to guide the discussion. Directed graphs featuring diagonalizable adjacency matrices with distinct non-zero eigenvalues will be termed admissible digraphs. 

The main contribution of this work is the establishment of a systematic approach to propose admissible options for a given directed graph, thereby enabling the application of the Graph Fourier Transform (GFT) for efficient processing of associated graph signals.

This paper is organized as follows. To facilitate the illustration of the main idea and highlight the advantages, the Introduction section is supported and enhanced by three subsections. Part A provides a general definition and deals with  the importance of admissibility, and Part B reviews Digital Signal Processing (DSP) concepts that motivated our outline. Finally, Part C  includes a brief review of two related works most closely aligned with the current subject.

Section II focuses on the theory that supports our algorithm, offering an in-depth exploration of the underlying principles and mathematical foundations. In Section III, we apply our approach to a real-world dataset, detailing the step-by-step implementation process and highlighting key findings. Lastly, an Appendix section is included to recap the essential concepts and terminology encountered throughout the paper, facilitating a comprehensive understanding of the subject matter.
\subsection{Graph Fourier Transform : Framework}
\label{framework}
Let $\G$ be a directed graph (digraph) with size $N$. A given signal $\mathbf{s}$ on $\G$ is represented as an $N$-dimensional vector in $\mathbb{R}^N$ (or $\mathbb{C}^N$). Theoretically, based on any (orthonormal) basis $\mathcal{B} = \{\mathbf{v}_0, \ldots, \mathbf{v}_{N-1}\}$, the processing of the signal $\mathbf{s}$ involves finding the scalars $y[k]$ to represent $\mathbf{s}$ in terms of the vectors in $\mathcal{B}$:
\begin{equation}
\label{eq1}
 \mathbf{s}=y[0]\mathbf{v}_0+\cdots+y[{N-1}]\mathbf{v}_{N-1}   
\end{equation}
Let $\V$ be the matrix whose columns are the vectors in $\mathcal{B}$.
Under these conditions, the representation of the signal $\mathbf{s}$ in terms of the basis vectors, as described in equation (\ref{eq1}), can be equivalently expressed using matrix notation:
\begin{equation}
\begin{cases}
\mathbf{y} = \V^{-1} \mathbf{s}, \\
\mathbf{y} = \left(y[0], \ldots, y[N-1]\right)^T.
\end{cases}
\end{equation}
When working with real datasets organized on directed graphs, various challenges can arise. To address these challenges, it is crucial to develop specific mathematical criteria that guide the analysis and processing of the data. If these criteria can be expressed in terms of an (orthonormal) basis \(\mathcal{B}\), graph signal processing techniques can be effectively applied, provided the following key conditions are met:

{\it 1) Graph structure compatibility.} The basis \(\mathcal{B}\) serves as a set of principal graph signals,  referred to as harmonics, which act as the primary components for representing signals on the graph. When processing a graph signal \(\mathbf{s}\), one crucial step involves decomposing it into these principal components to analyze how each basis vector influences the signal's properties. This decomposition process underscores the importance of establishing a connection between the basis \(\mathcal{B}\) and the underlying directed graph structure's characteristics.

{\it 2) Efficiency. } Coefficients $y[k]$s denote the coordinates of a graph signal when expressed in terms of the principal components constituting the graph Fourier basis. Manipulating the initial graph signal generally involves two primary steps: altering the Fourier coefficients and reconstructing the modified graph signal.

The adjacency matrix functor serves as a mapping between the category of directed graphs and the category of square matrices, encapsulating the connections between the graph's vertices.

Given the adjacency matrix \(\A\) associated with the digraph \(\G\), the Jordan normal decomposition allows us to factorize \(\A\) into the product of a Jordan matrix \(\mathbf{J}\) and a similarity transformation matrix \(\mathbf{V}\). This can be represented as:
\begin{equation}
\A = \V \J \V^{-1}
\end{equation}
Building upon this formula, we  define the graph Fourier transform \(\mathbf{F}\) associated with $\G$ as the inverse of the similarity transformation matrix $\mathbf{F} = \mathbf{V}^{-1}$. The columns of matrix \(\mathbf{V}\) serve as the basis generalized eigenvectors of the adjacency matrix \(\mathbf{A}\). These generalized eigenvectors constitute the graph Fourier basis \(\mathcal{B}_{\F}\), providing a set of principal graph signals on the directed graph \(\G\). 

While the graph Fourier basis \(\mathcal{B}_{\F}\),   offers valuable insights by encapsulating the principal components that capture the essential characteristics of the digraph \(\G\), there are two significant challenges that limit its widespread adoption in the realm of general graph signal processing:

(i) The first issue associated with employing the Jordan normal decomposition of the adjacency matrix 
$\A$ is its computational complexity. The decomposition process can be resource-intensive, particularly when dealing with large-scale graphs.

(ii) When the Jordan matrix is not a diagonal matrix, the utilization of {\it shift-invariant graph filters}, also known as {\it systems}, becomes impractical. These types of filters are known as essential tools for modifying the signals.
To further elaborate on this item, we reiterate that
\begin{itemize}
    \item A graph filter \(\mathbf{H}\) is shift-invariant if \(\mathbf{H} \mathbf{A} = \mathbf{A} \mathbf{H}\).
    \item For a given polynomial \(h\), the linear transformation \(h(\mathbf{A})\) is termed a system.
\end{itemize}
When the minimal and characteristic polynomials of \(\mathbf{A}\) are identical, Theorem 1 of \cite{sandryhaila2013discrete} provides a characterization for shift-invariant graph filters as follows:
\[
\text{Shift-invariant filters} = \{h(\mathbf{A}) : h \text{ is a polynomial}\}
\]



In the next section, the motivation for addressing both of the mentioned challenges is provided in detail.
\subsection{Motivation : Graph Fourier Basis in  classical  DSP}
In discrete signal processing, signals are initially analyzed on a line graph where the adjacency matrix corresponds to the backward shift matrix.
\begin{equation}
\mathbf{S}=\begin{bmatrix}
0 & \mathbf{I}_{N-1}\\
0 & 0
\end{bmatrix}
\end{equation}
Here, $\mathbf{I}_{N-1}$ represents the identity matrix of size $N-1$ and  $\mathbf{S}$ denotes the backward shift matrix of size $N$.

Backward shift matrices are nil-potent, meaning all its eigenvalues are zero. While this simplifies the Jordan decomposition, the resulting graph Fourier basis does not offer substantial practical utility in signal processing tasks.

When addressing one-dimensional digital signal processing involving periodic signals,  line digraphs are replaced  with cycle digraphs. This substitution makes the backward shift matrix circulant, which is a specific type of matrix with particular properties well-suited for handling periodic signals.

\begin{equation}
\mathbf{C}=\begin{bmatrix}
0 & \mathbf{I}_{N-1}\\
1 & 0
\end{bmatrix}
\end{equation}
Here, $\C$ represents the circulant  matrix of size $N$.

 In this circular domain, the Fourier basis becomes the most organized and widely used basis for signal processing tasks. Central to this framework is the normalized discrete Fourier transform, denoted by 
$\mathfrak{F}$, defined as follows:
\begin{equation}
\mathfrak{F} = \frac{1}{\sqrt{N}} \left[ \exp(-\frac{2\pi i kl}{N}) \right]_{k,l=0}^{N-1}.
\end{equation}
The discrete Fourier transform   converts the convolution product into a pointwise operation, thereby streamlining the signal processing procedure by enabling convenient modification of the coefficients. Notably, the inverse DFT can be readily derived through its hermitian, which inherently underscores its numerical stability. 

Any periodic signal can be conceptualized as a graph signal on the directed cycle graph, as depicted in Figure \ref{cycle}. The eigenvalue decomposition of the circulant matrix, which is the adjacency matrix, is derived using the discrete Fourier transform, thereby demonstrating the intrinsic connection between the DFT and the underlying graph structure upon which periodic signals are modeled.
\begin{equation}
\begin{cases}
\C=\V
\D\V^{-1}\\
\V=\FF^{-1}\\
\mathbf{D}=\text{diag}\left(1,\exp{\frac{2\pi i}{N}},\ldots,\exp{\frac{2\pi i(N-1)}{N}}\right) \end{cases}    
\end{equation}
\begin{figure}[tbp]
\centering
\includegraphics[width=0.175\textwidth,height=0.1\textheight]{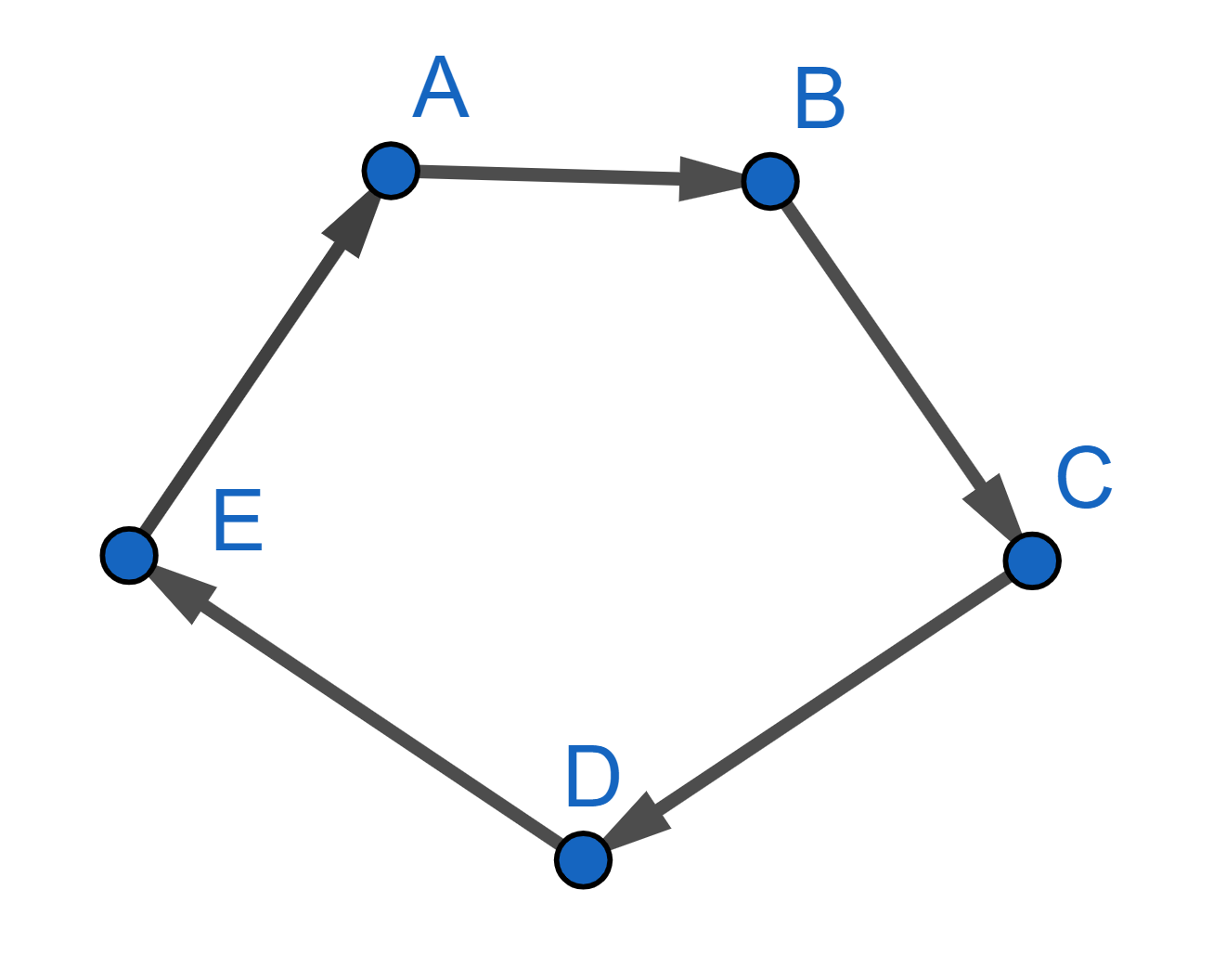}
\caption{The directed cycle graph with 5 vertices.}
\includegraphics[width=0.175\textwidth,height=0.12\textheight]{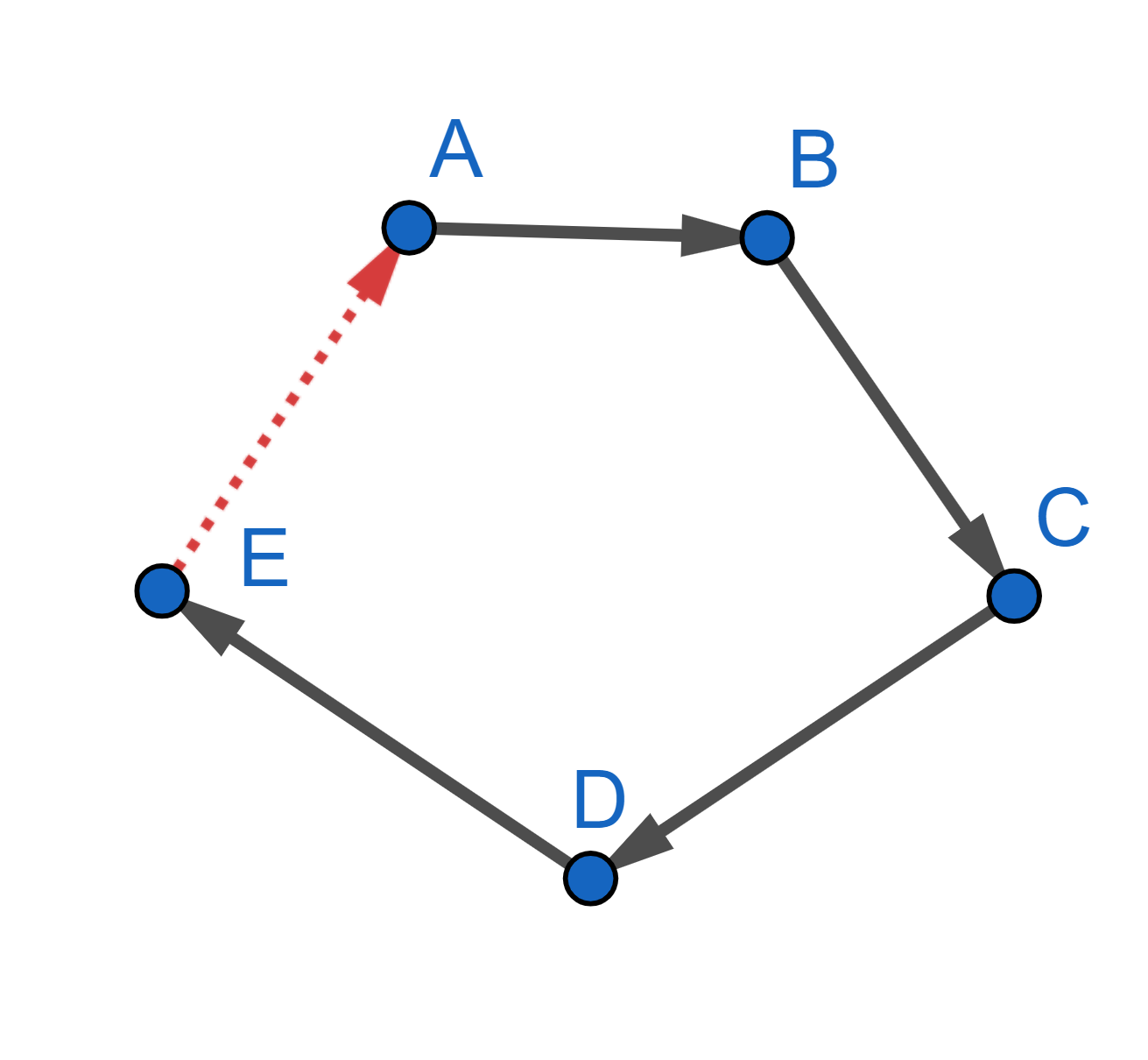}\label{cycle}
\caption{ By adding the red dotted line, the directed line graph with 5 vertices is included in the directed cycle graph.}
\label{line_cycle}
\end{figure}

In discrete signal processing, the foundational approach involves embedding the directed line graph into a directed cycle graph (see Figure \ref{line_cycle}). This transformation is achieved by connecting the sink of the line graph to its source, effectively altering a zero entry in the forward shift matrix and resulting in a circulant matrix structure. This modification turns the initial linear signal representation into a periodic signal on a circular domain.
By embedding the line graph into a cycle graph, DSP practitioners leverage the power of the discrete Fourier transform  for efficient signal processing. 

The primary contribution of this work is the development of an approach applicable to general digraphs.

\begin{definition}
\label{defn2}
Let \(\G\) be a digraph and let \(\mathbf{A}\) be the adjacency matrix associated with \(\G\) with the Jordan normal decomposition \(\mathbf{A} = \mathbf{VJV}^{-1}\).
\begin{enumerate}
    \item The digraph \(\G\) is non-singular if \(\mathbf{J}\) is a diagonal matrix with all non-zero diagonal entries.
    \item The non-singular digraph \(\G\) is admissible if the diagonal entries of \(\mathbf{J}\) are distinct.
\end{enumerate}
\end{definition}

Section \ref{Sec:ASP} establishes that admissible digraphs support a convolution product, denoted as \(\ostar\). This convolution product enables the characterization of any shift-invariant filter as a convolution operator: for a given polynomial \(h\), there exists a graph signal \(\mathbf{s}\) such that for every signal \(\mathbf{x}\), the following relationship holds:
\begin{equation}
h(\A)\mathbf{x} = \mathbf{s} \ostar \mathbf{x}
\end{equation}
Furthermore, the graph Fourier transform \(\mathbf{F}=\V^{-1}\) satisfies the convolution theorem, which is stated as follows:
 \begin{equation}
\F(\x\ostar\y)=\F\x\cdot\F\y
 \end{equation}
 In this equation, \(\mathbf{x}\) and \(\mathbf{y}\) represent graph signals, while \(\cdot\) denotes the pointwise product between the transformed vectors \(\mathbf{F}\mathbf{x}\) and \(\mathbf{F}\mathbf{y}\).

Therefore  admissible digraphs satisfy the minimal conditions necessary to employ GFT techniques for graph signal processing within the broader category of digraphs. This observation raises a question:

{\it For a given directed graph $\G$, can we devise a systematic method to construct an (optimal) admissible  directed graph $\G_e$ that includes  $\G$ as a directed subgraph?} 

Identifying such an approach could have significant implications for the efficient processing of graph signals on $\G$. This would allow us to exploit the intrinsic characteristics of admissible graph and capitalize on the potential of the GFT, which emerge from the eigenvalue decomposition of the adjacency matrix associated with $\G_e$ given in Definition \ref{defn2}.

{\it The primary contribution of this research lies in addressing the aforementioned question by effectively embedding any directed graph into a Cayley graph based on a cyclic finite group $\mathbb{Z}_N$, with the inclusion of a minimal number of new edges. These types of Cayley graphs are recognized as enriched admissible directed graphs, as explained in Theorem \ref{admissible_cayley}. It outlines a method for obtaining the optimal graph Fourier transform  with respect to numerical stability. 
}  

\subsection{Related works} The concept of incorporating new edges was first explored in \cite{seifert2021digraph}, where a method was proposed to  eliminate non-trivial Jordan blocks present in the adjacency matrix during each iteration. This process relies on the computation of the Jordan normal form  to dismantle the blocks. The method proceeds iteratively to ensure the complete removal of all non-trivial Jordan blocks. However, a major challenge arises due to the potential formation of new Jordan blocks when one is eliminated, leading to a time-consuming and prolonged process that can become difficult to control.

In the study \cite{domingos2020graph}, an intriguing attempt is proposed to define a class of graph Fourier transforms where the columns can be regarded as an approximate Fourier basis on graph $\G$. The aim is to identify some graph Fourier transforms $\F$ by solving the following optimization problem:
\begin{equation}
\label{mora}
\begin{cases}
\text{min}_{\F,\mathbf{T}}\|\A\F-\F\Lambda\| \\
\text{subject to } ~~~\A\F=\F\mathbf{T}  \\
\mathbf{T}~ \text{ is an uppertirangular matrix with zero diagonal entries}\\
~\sigma_{\text{min}}\F\geq\alpha \text{ and }  0<\alpha\leq1
\end{cases}
\end{equation}
The proposed method has demonstrated its ability to ensure spectral compatibility with the corresponding adjacency matrix. However, its effectiveness hinges on the preservation of structural properties. To elaborate this, let $\mathbf{F}$ denote the matrix obtained from the optimization problem (\ref{mora}), and let $\mathfrak{G}_F$ represent the set of all directed graphs for which $\mathbf{F}$ functions as a GFT. The greater the disparity between the graph structures in $\mathfrak{G}_F$ and the original digraph $\G$, the less reliable the results become. There is currently no evidence to suggest that this method can provide a satisfactory solution that ensures both spectral and structural compatibility.

\section{Theory} 
This section thoroughly presents the analytical evidence required to justify the methodologies employed in this study.



{\subsection{Admissible Extensions}} The primary focus of this section is to highlight that for a directed graph $\G$ with $N$ vertices, embedding it within a specific type of Cayley digraph \(\text{Cay}(\mathbb{Z}_N, \Gamma)\) can be achieved by introducing a minimal number of additional edges. This embedding ensures the existence of suitable admissible extensions of \(\G\), thereby enabling efficient graph signal processing.

The vertices of the Cayley digraph \(\text{Cay}(\mathbb{Z}_N, \Gamma)\) are given by the set \(\mathbb{Z}_N = \{0, \ldots, N-1\}\), where the connection set \(\Gamma\) determines the edges. All necessary details are provided in the appendix. We need to revisit a key point concerning the Cayley graph structure \(\text{Cay}(\mathbb{Z}_N, \Gamma)\), which will serve as a foundational framework for our subsequent analysis. The proof of this key point is also provided in the appendix.

\begin{theorem}
\label{admissible_cayley}
Suppose that $\A_{\Gamma}$ is the adjacancy matrix  of    the Cayley digraph \(\text{Cay}(\mathbb{Z}_N, \Gamma)\).
\begin{equation}\begin{cases}\A_\Gamma=\V\D_\Gamma\V^{-1} \hspace{1cm} (\V=\FF^{-1}) \\
\D_\Gamma=\sum_{q=1}^{k}\D^{n_q}\\
\D=\text{diag}\Big(1,\exp({\frac{2\pi i}{N}}),\ldots,\exp({\frac{2\pi i(N-1)}{N}})\Big) 
\end{cases}\end{equation}
\end{theorem}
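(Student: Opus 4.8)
The plan is to reduce the entire statement to the single circulant shift $\C$, whose spectral decomposition is already recorded in equation~(7), and then to note that this decomposition is preserved under taking powers and sums. Write $\Gamma = \{n_1,\ldots,n_k\}$ for the connection set. First I would unwind the definition of $\text{Cay}(\mathbb{Z}_N,\Gamma)$: there is a directed edge from vertex $i$ to vertex $j$ precisely when $j - i \equiv n_q \pmod{N}$ for some $n_q \in \Gamma$. Translated to matrices, $(\A_\Gamma)_{ij} = 1$ exactly when $j \equiv i + n_q$ for some $q$. Since the $n$-th power of the shift satisfies $(\C^{\,n})_{ij} = 1$ iff $j \equiv i + n \pmod{N}$ and vanishes otherwise, each element $n_q$ of the connection set contributes exactly one such shift matrix, and the adjacency matrix factors additively as
\[
\A_\Gamma = \sum_{q=1}^{k} \C^{\,n_q}.
\]
This identity is the heart of the argument; once it is in place, everything else is bookkeeping.

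Next I would invoke $\C = \V \D \V^{-1}$ with $\V = \FF^{-1}$ from equation~(7). Because the inner $\V^{-1}\V$ factors cancel telescopically, conjugation by $\V$ commutes with taking powers, so $\C^{\,n_q} = \V \D^{\,n_q} \V^{-1}$ for each $q$. Substituting into the additive factorization and pulling the common $\V,\V^{-1}$ outside the sum gives
\[
\A_\Gamma = \sum_{q=1}^{k} \V \D^{\,n_q} \V^{-1} = \V \Big( \sum_{q=1}^{k} \D^{\,n_q} \Big) \V^{-1} = \V \D_\Gamma \V^{-1},
\]
which is exactly the claimed formula with $\D_\Gamma = \sum_{q=1}^{k} \D^{\,n_q}$.

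The step I expect to require the most care — and the only genuinely substantive point beyond equation~(7) — is verifying that the DFT matrix $\FF$ simultaneously diagonalizes all powers of $\C$ with the stated eigenvalue labelling. Concretely, one checks that the $l$-th column of $\V = \FF^{-1}$, namely $\tfrac{1}{\sqrt{N}}\big(1,\omega^{l},\ldots,\omega^{(N-1)l}\big)^{T}$ with $\omega = \exp(2\pi i/N)$, is an eigenvector of $\C$ with eigenvalue $\omega^{l}$; raising to the power $n_q$ scales this eigenvalue to $\omega^{\,l n_q}$, which is precisely the $l$-th diagonal entry of $\D^{\,n_q}$. Since all the $\C^{\,n_q}$ share this one eigenbasis, their sum is diagonalized by the same $\V$ and the eigenvalues add entrywise along the diagonal. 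Notably, no appeal to the Jordan form is needed here: $\A_\Gamma$ is automatically diagonalizable by $\FF^{-1}$, which is exactly what makes the Cayley digraphs on $\mathbb{Z}_N$ natural candidates for the admissible extensions sought in Definition~\ref{defn2}.
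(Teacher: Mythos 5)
Your proof is correct and follows essentially the same route as the paper's: both write $\A_\Gamma = \sum_{q=1}^{k}\C^{n_q}$ from the definition of the connection set, substitute the eigendecomposition $\C = \FF^{-1}\D\FF$, and pull the conjugation outside the sum of powers. The only difference is that you explicitly verify the eigenvector computation for $\C$ that the paper simply cites as known, which is a reasonable addition but not a different argument.
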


Theorem \ref{admissible_cayley} demonstrates that the discrete Fourier transform  serves as a GFT for all types of these Cayley digraphs.

The following introduces the principal theorem that underpins the algorithm used to introduce the graph Fourier transform for any digraph. To ensure a comprehensive understanding of this theorem, it is beneficial to adopt the following convention:
 
{\it Dependency lists and pseudo-permutation.} Given an \(N \times N\) matrix \(\mathbf{A}\) with a nullity index \(g\), representing the dimension of its null space, we introduce the concepts of column and row dependency lists, denoted as \(L_{\text{DepCol}}\) and \(L_{\text{DepRow}}\), respectively. A row dependency list \(\mathcal{R}\) in \(L_{\text{DepRow}}\) consists of \(g\) rows of matrix \(\mathbf{A}\), chosen such that removing the rows indicated in \(\mathcal{R}\) does not affect the rank of \(\mathbf{A}\). Similarly, a column dependency list \(\mathcal{C}\) in \(L_{\text{DepCol}}\) comprises \(g\) columns of \(\mathbf{A}\) with the same property.
For a given pair \((\mathcal{R}, \mathcal{C})\), we define an \(N \times N\) matrix \(\mathbf{Q}\) with rank \(g\) and unit non-zero entries located exclusively at the intersections of \(\mathcal{R}\) and \(\mathcal{C}\) as a pseudo-permutation. Notably, there are \(g!\) distinct pseudo-permutations corresponding to each pair \((\mathcal{R}, \mathcal{C})\).

\begin{figure}[tbp]
\centering
\includegraphics[width=0.23\textwidth,height=0.125\textheight]{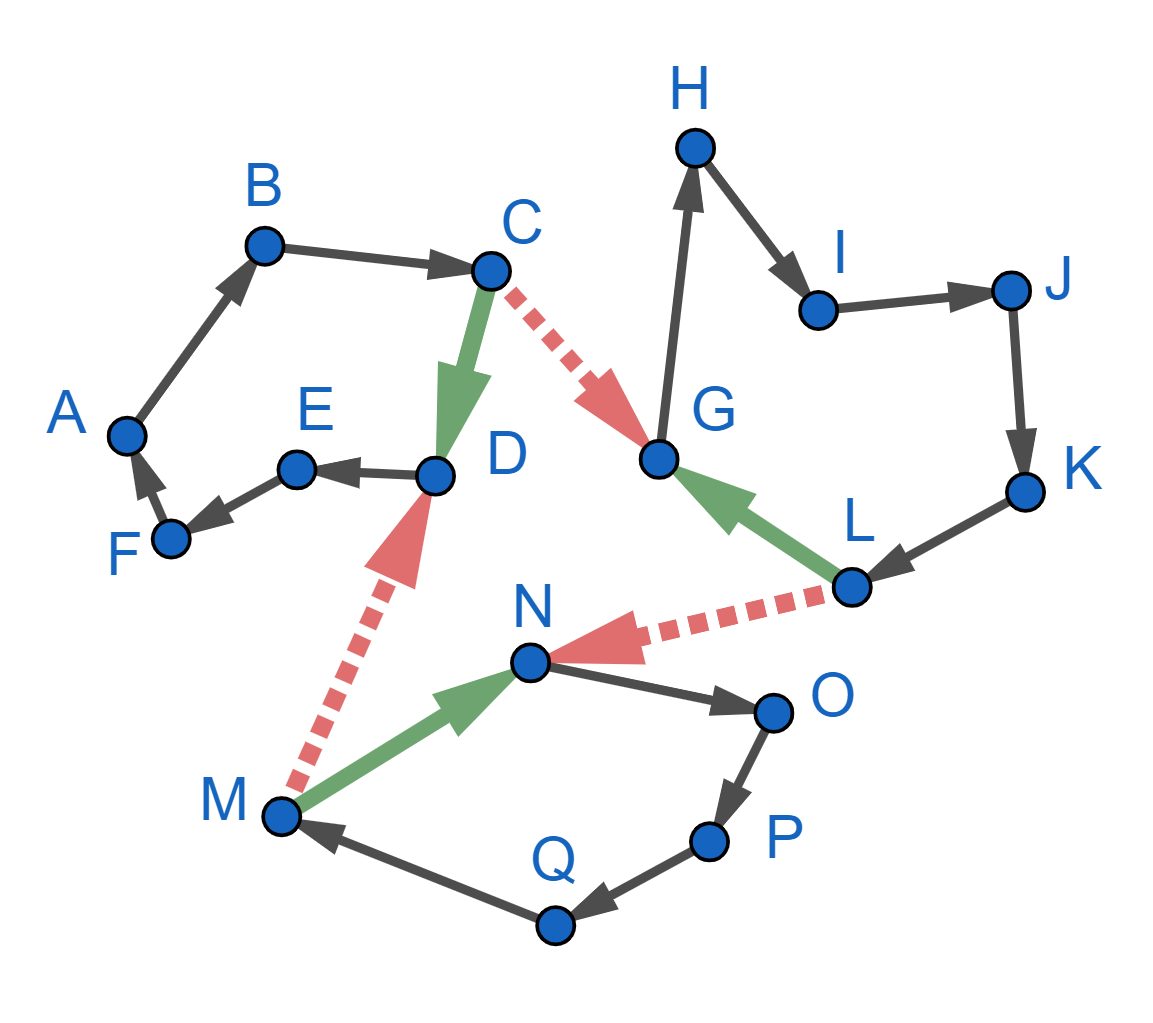}
\caption{Hamiltonian Cycle Creation}
\label{disjoint_cycles}
\end{figure}
\begin{theorem}\label{Edge}
Let \(\G\) be a digraph with \(N\) vertices. There is a Cayley graph \(\text{Cay}(\mathbb{Z}_N, \Gamma)\) that includes \(\G\) as a directed subgraph.
\end{theorem}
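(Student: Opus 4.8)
The plan is to exhibit an explicit connection set together with a vertex labeling, exploiting the fact that the Cayley structure on $\mathbb{Z}_N$ encodes edges purely through vertex differences. Recall that in $\text{Cay}(\mathbb{Z}_N, \Gamma)$ there is a directed edge from $a$ to $b$ precisely when $b - a \in \Gamma \pmod N$; equivalently, each $\gamma \in \Gamma$ contributes the whole family of translation arcs $a \to a + \gamma$ for all $a \in \mathbb{Z}_N$. Consequently, to realize $\G$ as a directed subgraph it suffices to choose a bijective labeling $\phi : V(\G) \to \mathbb{Z}_N$ and then let $\Gamma$ collect every difference that an arc of $\G$ induces.

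First I would fix the labeling. Any bijection makes the argument work, but to obtain a well-structured and strongly connected host it is natural to build the labeling around a spanning cycle: form a Hamiltonian cycle on the $N$ vertices (creating it by adjoining the missing arcs, as in Figure \ref{disjoint_cycles}) and number the vertices $0, 1, \ldots, N-1$ consecutively along it. With this numbering the cycle arcs become exactly the unit-difference arcs, so $1 \in \Gamma$ and $\text{Cay}(\mathbb{Z}_N, \Gamma)$ then automatically contains the directed $N$-cycle $\text{Cay}(\mathbb{Z}_N, \{1\})$.

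Next I would set $\Gamma = \{\, \phi(v) - \phi(u) \bmod N : (u,v) \in E(\G) \,\}$ and verify the embedding. For every arc $(u,v)$ of $\G$ the residue $\gamma = \phi(v) - \phi(u)$ lies in $\Gamma$ by construction, hence $\text{Cay}(\mathbb{Z}_N, \Gamma)$ carries the arc $\phi(u) \to \phi(u) + \gamma = \phi(v)$. Thus $\phi$ is an injective, arc-preserving map, i.e. a directed-subgraph embedding of $\G$ into $\text{Cay}(\mathbb{Z}_N, \Gamma)$, which establishes the claim. (If $\G$ is loopless every difference is nonzero, so $0 \notin \Gamma$, matching the usual convention; a self-loop would merely force $0 \in \Gamma$.)

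The existence itself meets no real obstacle: because $\mathbb{Z}_N$ is a group, every ordered pair of labels has a well-defined difference, so each arc of $\G$ can always be absorbed into $\Gamma$. The genuine difficulty, which this theorem does not yet address, is quantitative. The host Cayley digraph has $N|\Gamma|$ arcs, so the number of \emph{added} arcs equals $N|\Gamma| - |E(\G)|$, and minimizing this forces one to select a labeling that minimizes $|\Gamma|$, the count of distinct edge-differences. That optimization, together with the admissibility requirement of Definition \ref{defn2}, is where the substantive work lies, rather than in this bare existence statement.
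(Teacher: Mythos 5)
Your proof is correct for the statement as written, and it isolates the right observation: since $\mathbb{Z}_N$ is a group, \emph{any} bijective labeling $\phi$ together with $\Gamma=\{\phi(v)-\phi(u)\bmod N:(u,v)\in E(\G)\}$ realizes $\G$ as a directed subgraph of $\text{Cay}(\mathbb{Z}_N,\Gamma)$, so bare existence is essentially free. The paper's proof shares your final step verbatim (Step 3: take edge differences relative to a Hamiltonian labeling, adjoin $1$), but it spends two preliminary steps choosing the labeling much more carefully, and that is where your routes diverge. Rather than completing an arbitrary vertex ordering to a spanning cycle by adjoining whatever arcs are missing (up to $N$ of them), the paper first makes the adjacency matrix non-singular by adding only $g=\mathrm{nullity}(\A)$ edges via a pseudo-permutation, then reads off from a non-vanishing term $\prod_i a_{i,\sigma(i)}$ of the determinant a disjoint cycle cover of the vertices built from \emph{existing} arcs, and finally merges those $k$ cycles into a Hamiltonian cycle with at most $k$ further edges. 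This buys exactly the quantity you flag at the end as the ``genuine difficulty'': a Hamiltonian cycle that reuses as many arcs of $\G$ as possible, hence a small number of added edges and better structural and spectral compatibility for the admissible extensions built downstream. So your argument is a leaner proof of the literal theorem, while the paper's is really a constructive proof of the stronger, unstated claim about edge economy; your closing paragraph correctly anticipates that distinction. One small point in your favor: your formula $\phi(v)-\phi(u)\bmod N$ is the correct directed difference, whereas the paper's $\Gamma_0=\{|a-b|:(a,b)\in E\}$ is written with an absolute value that does not respect edge orientation and should be read as the signed difference modulo $N$.
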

\begin{proof}
Let \(\A\) denote the adjacency matrix associated with \(\G\). The proof is conducted in three steps.

{\it Step1. Non-singularization:} If $\A$ is singular, this step must be performed; otherwise, it can be skipped. Suppose that the nulity index of $\A$ is $g$. Let $\mathcal{C}$  (resp. $\mathcal{R}$)  be  $g$ columns (resp. $g$ rows)  representing a  column dependency list (row dependency list) of $\A$ and  $\mathbf{Q}$ be a corresponding pseudo-permutation to the pair ($\mathcal{R}$,$\mathcal{C}$). We  define $\A_{\text{Inv}}=\A+\mathbf{Q}$.  

By excluding the columns $\mathcal{C}$ and rows $\mathcal{R}$ from $A$, we obtain the   $(N-g)\times(N-g)$  submatrix of $\A$, denoted as $\A_0$ that is definitely  non-singular. Using the Laplace expansion to compute the determinant of $\A_{\text{Inv}}$ and starting from either the columns $\mathcal{C}$ or the rows  $\mathcal{R}$, we directly observe that 
\[|\det(\A_{\text{Inv}})|=|\det(\A_0)|\]
Let $\G_{\text{Inv}}$ be the digraph whose adjacancy matrix is  $\A_{\text{Inv}}$. It is indeed created by adding $g$ new edges to 
$\G$, as specified by the pseudo-permutation $\mathbf{Q}$.  

 {\it Step2. Hamiltonian Cycle Creation:} We represent, 
 \begin{equation}
 \A_{\text{Inv}}=
\begin{bmatrix}
a_{ij}
\end{bmatrix}_{i,j=0}^{N-1}
 \end{equation}
  Employing the definition of the determinant of $\A_{\text{Inv}}$ as a sum over permutations,
\begin{equation}
|\det(\A_{\text{Inv}})|=\sum_{\sigma}\prod_{0\leq i\leq N-1} a_{i,\sigma(i)}    
\end{equation}
where the sum is taken over all permutations $\sigma$ on the set $\{0,\ldots,N-1\}$. Because $\A_{\text{Inv}}$ is non-singular, there exists a permutation $\sigma$ such that $\prod_{0\leq i\leq N-1} a_{i,\sigma(i)}$ is non-zero. This $\sigma$ corresponds to a collection of $k$ disjoint cycles $\mathcal{O}_1,\ldots, \mathcal{O}_k$ that cover all vertices of $\G_{\text{Inv}}$. As visualized in Figure \ref{disjoint_cycles} for \(k=3\), by adding at most \(k\) new edges, it is possible to connect the cycles \(\mathcal{O}_1, \ldots, \mathcal{O}_k\) to form a cycle \(\mathcal{O}\) that covers all vertices of \(\G_{\text{Inv}}\). First, select an arbitrary edge on each of these cycles (green ones). Then, using the new edges (red ones), connect the cycles \(\mathcal{O}_m\) to form the cycle \(\mathcal{O}\).

{\it Step3. Constructing the Cayley Graph:} 
The cycle digraph $\mathcal{O}$ is isomorphic to digraph $\mathbb{Z}_N$. Let $E$ be the set of all edges present in $\G_{\text{Inv}}$, except those on the cycle $\mathcal{O}$. Note that $E$ may be considered as  subset of $\mathbb{Z}_N \times \mathbb{Z}_N$. We define:

\[
\begin{cases}
\Gamma_0 = \{|a - b| : (a, b) \in E\} \\
\Gamma = \Gamma_0 \cup \{1\}
\end{cases}
\]
It is evident that the initial digraph $\G$ is embedded into  the Cayley digraph \(\text{Cay}(\mathbb{Z}_N, \Gamma)\). 
\end{proof}

\begin{remark}
Upon examining the proof of Theorem \ref{Edge}, it becomes evident that the assignment of specific non-zero weights to the additional edges does not  affect the attainment of non-singularity. Instead, the existence of these edges plays a pivotal role in ensuring the adjacency matrix's non-singularity. This implies that, as long as the newly introduced edges are included in the adjacency matrix with any non-zero weight, the resultant matrix will inherently possess the property of non-singularity. During this study, by an \textit{admissible extension} we mean the newly added edges are considered with no weight unless specifically pointed out.
\end{remark}

In light of Theorem \ref{Edge}, we find it suitable to present the following definition as a foundation for our continued discussion:

\begin{definition}
Let \(\G\) be a digraph. Any admissible (non-singular) digraph that contains \(\G_e\) as a subgraph is referred to as an admissible extension (non-singular extension) of \(\G\).
\end{definition}

Several considerations deserve attention with regard to Theorem \ref{Edge}.\medskip

{\bf (i)} The challenge of embedding a digraph within Cayley graphs is of great interest in graph theory, a field concentrating on the analysis and manipulation of graph structures \cite{godsil,przezdziecki}. Theorem \ref{Edge} serves as a solution to this important problem, offering a straightforward tool for researchers in this area.

{\bf (ii)}  When the nullity index \(g\) of the adjacency matrix associated with the directed graph is greater than one, several  non-isomorphic non-singular extensions will emerge. Let \(\beta_{\text{Dep\_Col}}\) and \(\beta_{\text{Dep\_Row}}\) denote the number of different dependency column and row lists, respectively, and define 
\begin{equation}
\label{total}
 \text{Total}_{\text{Inv}}(\G) = \beta_{\text{Dep\_Col}} \times \beta_{\text{Dep\_Row}} \times g!
\end{equation}
Theorem \ref{Edge} provides numerous \(\text{Total}_{\text{Inv}}\) non-singular extensions of the digraph \(\G\). This leads to the potential existence of many non-isomorphic admissible extensions.

${\bf (ii)}$  {Drawing from experience, it is widely  observed that on any directed graph, the non-singular extension is typically just an admissible extension as well}.
 However, in cases where an admissible  extension does not immediately materialize after forming the non-singular extension, the corresponding Cayley graph extension identifies the positions where additional edges could be inserted to achieve the desired outcome. In these scenarios, the admissible extension  is often achieved after just one or two more edges are added.  These observations  serve as a guide for achieving the  admissible extension and  provide theoretical support for the existence of  admissible extensions for any arbitrary directed graph.

{\bf (iii)} The methodology, as delineated in Theorem \ref{Edge}, comprises three distinct adjustments made to both the directed graph and its corresponding adjacency matrix through the introduction of new edges.  Initially, the focus is on ensuring non-singularity, followed by the establishment of a Hamiltonian cycle in the subsequent step. Finally, the process progresses towards achieving diagonalizability. 

The inclusion of new edges in the first phase precisely corresponds to the nullity of the adjacency matrix. After completing this step, all Jordan blocks associated with the eigenvalue \(0\) are eliminated.

\begin{algorithm}[tb]
    \caption{Admissible Extension Algorithm}
    \label{Enveloping Graphs}
    \begin{algorithmic}
        \Require Adjacency matrix $\mathbf{A}$ of directed graph $\G$
        \State $N \gets$ size($\mathbf{A}$) \Comment{Number of nodes in graph $\G$}
        \State $r \gets$ rank($\mathbf{A}$) \Comment{Rank of matrix $\mathbf{A}$}
        \State $g \gets N - r$ \Comment{Nullity of matrix $\mathbf{A}$}
        \State Find all lists $\mathcal{R}$ of $r$ linearly independent rows of $\mathbf{A}$
        \State Find all lists $\mathcal{C}$ of $r$ linearly independent columns of $\mathbf{A}$
        \State For each pair $(\mathcal{R}, \mathcal{C})$, construct the $N \times N$ pseudo-permutation matrix $\mathbf{Q}$ using $g$ rows not in $\mathcal{R}$ and $g$ columns not in $\mathcal{C}$
        \State $\mathbf{A}_{\text{inv}} \gets \mathbf{A} + \mathbf{Q}$ \Comment{Modified adjacency matrix}
        \State Verify all possible choices of $(\mathcal{R}, \mathcal{C})$ to ensure the admissibility of $\G_{\text{Inv}}$, the digraph associated with $\mathbf{A}_{\text{inv}}$
        \Ensure A class of admissible digraphs  that include $\G$ as a subgraph.
    \end{algorithmic}
\end{algorithm}


\subsection{Spectral components: Admissible Extensions}
Let $\G_e$ be  an  admissible extension of $\G$ with the associated adjacency matrices $\A_e$ and  $\A$. Let  $\Q$ be a corresponded  pseud-permutation with 
\begin{equation}
\label{eq_psudo}
\A_e=\A+\Q
\end{equation}  
Let $\F=\V^{-1}$ be a GFT on $\G_e$. The columns of  $\V$ serve a basis compsed of right eigenvectors of $\A_e$.  We recall that each eigenvalue of $\A_e$ possesses an algebraic multiplicity of one, indicating that   the eigenvectors are unique up to scalar multiplication. 
\begin{equation}
\begin{cases}
\A_e\V=\V\D\\
\D=\text{Diag}(\lambda_0,\ldots,\lambda_{N-1}), \\
\lambda_k\text{s are eigenvalues of } \A_e
\end{cases}
\end{equation}
By leveraging  (\ref{eq_psudo}), we can derive the equation
\begin{equation}
\label{app}
-(\A\V-\V\D)=\Q\V
\end{equation}
This expression indicates that \(\F\), the graph Fourier transform of \(\G_e\), functions as an approximate graph Fourier transform for \(\G\), with an error term represented by \(\Q\V\). The non-zero rows of \(\Q\V\) capture the influence of the edges required to achieve the admissibility of \(\G_e\). This matrix helps determine how effectively the eigenvectors of \(\A_e\) can serve as approximations for the eigenvectors of \(\A\).

Let's consider the $k$-th column vector of $\V$, denoted as $\w_k$, which satisfies the equation 
\begin{equation}
    \A\w_k - \lambda_k\w_k = \Q\w_k
\end{equation} 
The vector \(\Q\w_k\), whose  nonzero components of \(\w_k\) are just appeared in the row indices addressed by $\Q$, helps us evaluate how well \(\w_k\) can serve as an approximate eigenvector for \(\A\). To continue with the discussion, let us consider the following  $(\delta,\Delta)$ indices:
\begin{equation}
\label{indices}
\begin{cases}
\Delta_{\V}=||\Q\V||\\
\delta_{\V}=\text{max}\{||\Q\w_k||: k=0,\ldots, N-1\}    
\end{cases}
\end{equation}


(i) Trivially, for any arbitrary small $\epsilon>0$ and eigenvector $\w_k$, the scaled vector $\epsilon\w_k$ is also an eigenvector of $\A_e$ associated with the eigenvalue $\lambda_k$. We may write, 
\begin{equation}
\label{epsilon}
\begin{cases}
\A_e\V_\epsilon=\V_\epsilon\D   \\
\V_\epsilon=\epsilon\V
\end{cases}
\end{equation}
Indeed $\V_\epsilon$ corresponds some graph Fourier transform  of the admissible digraph $\G_e$. The second  condition in  (\ref{epsilon}) clarifies that each column of $\V_\epsilon$ is a scaled version of the original one $\V$ by the factor $\epsilon$.
Theoretically, introducing $\V_\epsilon$ allows for assuming both indices $\delta_{V_\epsilon}$ and $\Delta_{\V_\epsilon}$ to be sufficiently small. However, in practice, the choice of $\epsilon$ poses a trade-off with the numerical stability of $\V_\epsilon$. A smaller $\epsilon$ might satisfy the assumption of small indices but could also result in a reduced numerical stability when working with $\V_\epsilon$. This trade-off necessitates careful consideration in real-world applications, as both spectral  and numerical stability are essential factors for accurate and reliable results. Striking an optimal balance between the two aspects is crucial to ensure the best performance of the graph-based system under investigation. 

(ii) In the context of a weighted admissible extension  for $\G$, there is a relationship between the  weights assigned to the newly added edges and the norm of the weighted pseudo-permutation matrix. Specifically, as the edge weights are chosen to be smaller, the norm of the weighted pseudo-permutation matrix also decreases. This direct impact is evident in the delta indices (\ref{indices}), which become smaller as well. Choosing excessively small weights might compromise the numerical stability of the graph Fourier transform, potentially leading to unreliable results.

If the primary objective is to obtain a numerically stable graph Fourier transform for a given graph $\G$ while minimizing the indices introduced in (\ref{indices}), it is advisable to explore the entire domain of weighted and unweighted admissible extensions. By examining various admissible extensions, one can identify the optimal graph that strikes the best balance between accurate eigenvector estimates and numerical stability, highlighting the benefits of this current approach.

\begin{example}
\label{ex10}
Let $\G$ be the directed line graph of size $N$ whose adjacency matrix is $\mathbf{S}$, the backward shift matrix. Connecting the sink to the source in this case, a unique rank-one pseudo-permutation $\Q$ exists, making $\mathbf{S}$ the circulant matrix $\C$. 

In the unweighted scenario, the graph Fourier transform is just $\FF$. The following values   are simply formulated:
\begin{equation}
\label{F_delta}
\begin{cases}
\Delta_{\V}=1  \hspace{1.1cm} \V=\FF^{-1} \\
\delta_{\V}=\frac{1}{\sqrt{N}}\\
{k}(\V)=1 \hspace{1cm} \text{condition number}
\end{cases}    
\end{equation}
Now suppose that the connection of the sink to the source is considered with the weight 
\( w \) and denoted as \(\C_w \) with the adjacency matrix of the weighted admissible digraph.
\begin{equation}
\C_w=\begin{bmatrix}
0 & \mathbf{I}_{N-1} \\
w & 0
\end{bmatrix}
\end{equation}
The characteristic polynomial of \( \C_w \) is just \( p(\lambda) = \lambda^N - w \). Thus, the eigenvalues of \( \C_w \) are just the  $N$-th roots of $w$ and they are given  as follows when $w$ is a positive real number:
\begin{equation}
\lambda_k = w^{1/N} \exp{\left(\dfrac{2\pi i k}{N}\right)} \quad \text{for} \quad k = 0, 1, \ldots, N-1
\end{equation}
We define,  
\begin{equation}
\label{f9}
\V_w=\dfrac{1}{\sqrt{N}}\begin{bmatrix}
1 & 1 & \cdots & 1 \\
\lambda_0 & \lambda_1 & \cdots & \lambda_{N-1} \\
\vdots &   \vdots   & \cdots &  \vdots \\
\lambda_0^{N-1}  & \lambda_1^{N-1} & \cdots  & \lambda_{N-1}^{N-1}
\end{bmatrix}
\end{equation}
One may directly check that the $k$th column of $\V_w$ is a right eigenvector of $\C_w$ corresponding to the eigenvalue $\lambda_k$. This yields:
\begin{equation}
\label{f10}
\begin{cases}
\C_w=\V_w\mathbf{D}_w\V^{-1}_w\\
\mathbf{D}_w=\text{diag }({\lambda}_0,\ldots,{\lambda}_{N-1})\\
\V_w=\mathbf{E}_w\FF^{-1}\\
\mathbf{E}_w=\text{diag }\big(1, w^{\frac1N},\ldots, w^{\frac{N-1}{N}}\big).\\
\end{cases}
\end{equation}
Notably $\V_w$ is normalized when $w\leq1$. Furthermore, 
\begin{equation}
\begin{cases}
\Delta_{\V_w}=w \medskip \\
\delta_{\V_w}~=\dfrac{w}{\sqrt{N}} \medskip\\
k(\V_w)=\dfrac{1}{w^{\frac{N-1}{N}}}
\end{cases}
\end{equation}

\begin{figure}[tbp]
\centering
\includegraphics[width=0.45\textwidth,height=0.15\textheight]{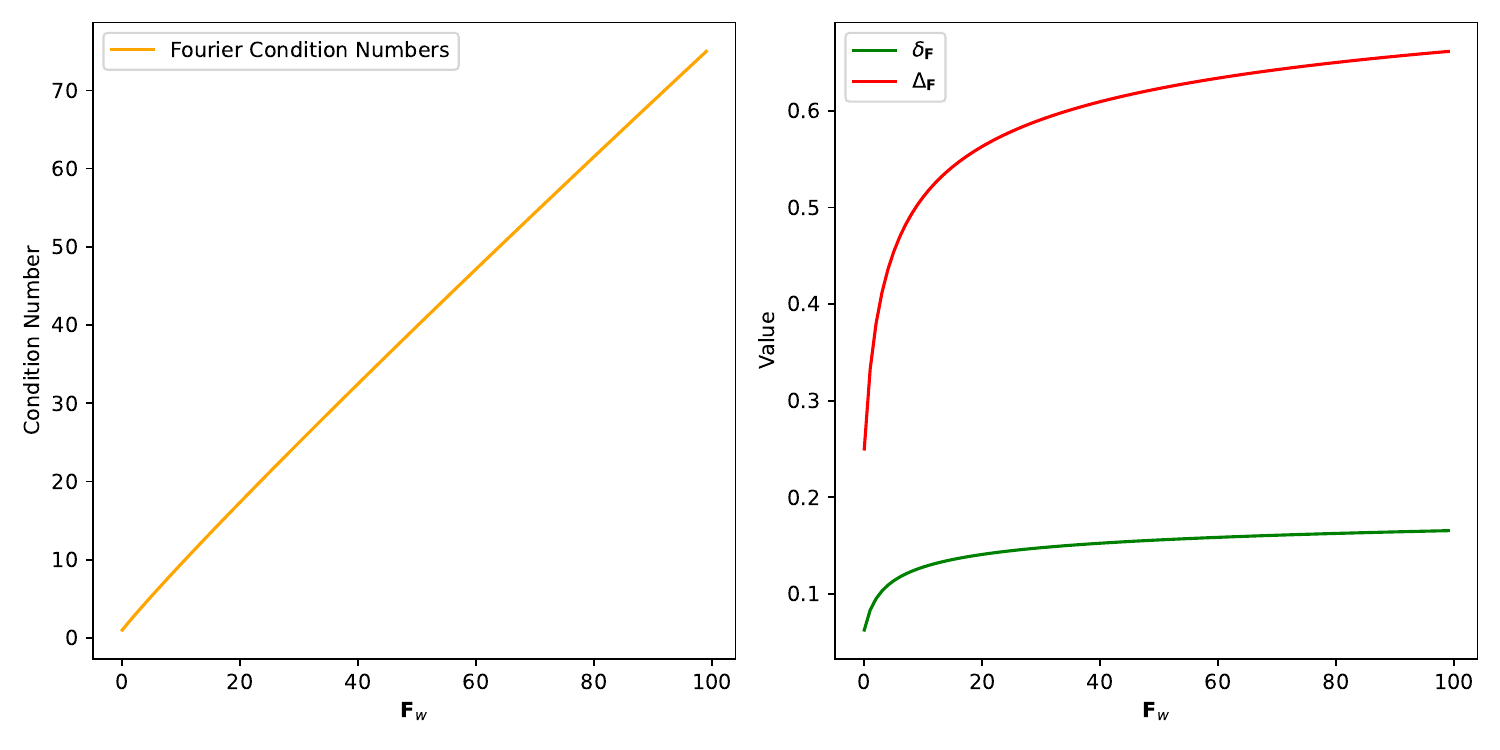}
\caption{The figure shows the $(\delta,\Delta)$ indices and  condition numbers of \(\F_w\) for weights ranging from \(1\) to \(0.01\) with a step size of \(0.01\) when \(N=16\). The minimum and maximum condition numbers are reported as \(1\) and \(74.98942\).
}
\label{DFT_ww_delta_and_Delta}
\end{figure}
 In Figure \ref{DFT_ww_delta_and_Delta}, the condition numbers and \((\delta, \Delta)\) indices of \(\F_w\) are illustrated for weights ranging from \(1\) to \(0.01\) with a step size of \(0.01\), when \(N=16\).
\end{example}

\subsection{Convolution Product: Admissible Digraphs \label{Sec:ASP}}
As mentioned in the introduction, to apply graph signal processing techniques to a digraph \(\G\), it is essential for the graph to support a convolution product capable of accommodating all possible systems.

In the domain of digital signal processing, the cyclic group \(\mathbb{Z}_N\) provides such a convolution product. For two discrete-time signals \( x[n] \) and \( y[n] \) with a period \( N \), their (normalized) cyclic convolution is defined as follows:
\begin{equation}\label{conv-dsp}
(x \ast y)[n] = \frac{1}{\sqrt{N}} \sum_{m=0}^{N-1} x[m] y[(n - m)]  
\end{equation}
Two key points make the conventional convolution product \( \ast \) efficient. First, the discrete Fourier transform \( \FF \) converts it to the point-wise product:
\begin{equation}
\FF(\x \ast \y) = \FF \x \cdot \FF \y\\
\end{equation}
Any system \( \mathbf{H} \) is formulated by \( \mathbf{H} \x = \x \ast \s \) for some unique signal \( \s \).

The class of digraphs supporting such a convolution would be reliable for utilizing graph filters for signal analysis. The subsequent definition offers a mathematical expression for this desired convolution product:
\begin{definition}
Let \(\G\) be a digraph with the associated adjacency matrix $\A$ and GFT  \(\mathbf{F}\)  on \(\G\). We say that a binary operation \(\ostar\) on the set of signals on \(\G\) defines a convolution product if the following conditions are satisfied:

\begin{enumerate}
    \item The convolution theorem holds: for given signals \(\mathbf{x}\) and \(\mathbf{y}\) on \(\G\),
    \begin{equation}
        \mathbf{F}(\mathbf{x} \ostar \mathbf{y}) = \mathbf{F}\mathbf{x} \cdot \mathbf{F}\mathbf{y}
    \end{equation}
    where \(\cdot\) denotes the pointwise product between the transformed signals \(\mathbf{F}\mathbf{x}\) and \(\mathbf{F}\mathbf{y}\).
    \item Any system is a convolution operator: for a given polynomial \(h\), there exists a signal \(\mathbf{s}\) on \(\G\) such that for an arbitrary signal \(\mathbf{x}\) on \(\G\),
    \begin{equation}
        h(\mathbf{A})\mathbf{x} = \mathbf{s} \ostar \mathbf{x}
    \end{equation}
    \end{enumerate}
\end{definition}

\begin{theorem}
\label{con-thm}
Every admissible digraph  $\G$ possesses a  convolution product. 
\end{theorem}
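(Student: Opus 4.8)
The plan is to define the convolution product directly through the GFT, mirroring the DSP identity $\FF(\x\ast\y)=\FF\x\cdot\FF\y$ that motivated the whole construction. Since $\G$ is admissible, its adjacency matrix factors as $\A=\V\D\V^{-1}$ with $\D=\mathrm{Diag}(\lambda_0,\ldots,\lambda_{N-1})$ having distinct nonzero entries, and the GFT $\F=\V^{-1}$ is invertible. I would simply set
\begin{equation}
\x\ostar\y:=\F^{-1}(\F\x\cdot\F\y),
\end{equation}
where $\cdot$ denotes the pointwise (Hadamard) product. With this definition the convolution theorem (condition 1) is immediate: applying $\F$ to both sides and using $\F\F^{-1}=\mathbf{I}$ gives $\F(\x\ostar\y)=\F\x\cdot\F\y$. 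The only thing to verify at this stage is that $\ostar$ is a genuine, well-defined binary operation on the space of signals, which follows at once from the invertibility of $\F$.

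For condition 2 the crux is to recognize that convolution-by-$\s$ is a diagonal operator in the Fourier domain. Writing $\F\s=(c_0,\ldots,c_{N-1})^T$, the pointwise product becomes multiplication by $\mathrm{Diag}(\F\s)$, so that for every signal $\x$,
\begin{equation}
\s\ostar\x=\V\,\mathrm{Diag}(\F\s)\,\V^{-1}\x.
\end{equation}
On the other hand, diagonalizability yields $h(\A)=\V\,h(\D)\,\V^{-1}$ with $h(\D)=\mathrm{Diag}(h(\lambda_0),\ldots,h(\lambda_{N-1}))$. Comparing the two expressions, the identity $h(\A)\x=\s\ostar\x$ holds for all $\x$ precisely when $\F\s=(h(\lambda_0),\ldots,h(\lambda_{N-1}))^T$. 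Because $\F$ is invertible, the signal $\s=\F^{-1}(h(\lambda_0),\ldots,h(\lambda_{N-1}))^T$ exists and is unique, which establishes condition 2 and realizes every system as a convolution operator.

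Finally I would make explicit where admissibility is genuinely used. The argument above only needs diagonalizability; it is the \emph{distinctness} of the eigenvalues that upgrades the correspondence to a bijection and links $\ostar$ back to shift-invariant filters. Concretely, given any $\s$, set $c_k=(\F\s)_k$; since the $\lambda_k$ are distinct, Lagrange interpolation supplies a polynomial $h$ with $h(\lambda_k)=c_k$, so the convolution operator $\s\ostar(\cdot)$ coincides with the system $h(\A)$. Hence the classes of convolution operators, of systems, and (via minimal polynomial equal to characteristic polynomial for distinct eigenvalues) of shift-invariant filters all coincide, recovering the classical picture. I expect the main obstacle to be conceptual bookkeeping rather than computation: one must faithfully identify the pointwise product with a diagonal operator under the possibly non-unitary basis $\V$, and track the normalization so that $\ostar$ reduces to the normalized cyclic convolution $\ast$ of DSP in the special case where $\G$ is the cycle and $\F=\FF$.
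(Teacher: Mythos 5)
Your proof is correct and is essentially the paper's argument run in the opposite direction: you define $\x\ostar\y=\F^{-1}(\F\x\cdot\F\y)$ and then solve for $\s=\F^{-1}h(\D)\mathbf{1}$, whereas the paper first sets $\s_h=\F^{-1}h(\D)\mathbf{1}$, uses Lagrange interpolation on the distinct eigenvalues to write every signal uniquely as $\s_{h_{\x}}$, and defines $\x\ostar\y=\s_{h_{\x}h_{\y}}$ --- the two operations coincide, and the verification of condition 2 yields the same signal in both cases. Your closing observation correctly pinpoints that distinctness of the eigenvalues is only needed for the converse identification of convolution operators with systems (the paper instead uses it up front to make its definition of $\ostar$ well-posed), so no gap remains.
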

\begin{proof}
Let $\A$ be the associative adjacency matrix of $\G$ with the eigenvalue decomposition $\A=\F^{-1}\D\F$. 

Let \(\mathbf{1}\) be the constant signal whose all components are \(1\). For any given polynomial \(h\), we define 
\begin{equation}
\label{f7}
\mathbf{s}_h = \mathbf{F}^{-1} h(\mathbf{D}) \mathbf{1}
\end{equation}

The admissibility property allows, by leveraging the Lagrange interpolation polynomials, any graph signal \(\mathbf{x}\) on \(\G\) to be uniquely represented by \(\mathbf{x} = \mathbf{s}_{h_{\x}}\) for some polynomial \(h_{\x}\). Utilizing this key point, we define:
 
\begin{equation}
\begin{cases}
\ostar:\mathbb{C}^N\times\mathbb{C}^N\to\mathbb{C}^N \\
\x\ostar\y=\s_{h_{\x}h_{\y}}
\end{cases}
\end{equation}
Here $h_{\x}h_{\y}$ is just the product of polynomials $h_{\x}$ and $h_{\y}$. It is evident that \(\mathbf{x} \ostar \mathbf{y}=\mathbf{y} \ostar \mathbf{x}\). Using (\ref{f7}), it can be directly verified that for a set of signals \(\mathbf{x}_1, \ldots, \mathbf{x}_k\) and \(\mathbf{y}_1, \ldots, \mathbf{y}_k\), and scalars \(x_1, \ldots, x_k\) and \(y_1, \ldots, y_k\), the following property holds:
\begin{equation}
\Big(\sum_{m=1}^k a_m\x_m\Big)\ostar\Big(\sum_{n=1}^kb_n\y_n\Big)=\sum_{m,n=1}^k a_mb_n\x_m\y_n
\end{equation}
Now, we check $\ostar$ satisfies the convolution theorem. 
\begin{align*}
\F(\x\ostar\y)&=\F(\s_{h_{\x}h_{\y}})=\F(\F^{-1}h_{\x}h_{\y}(\D)\mathbf{1})\\
&=h_{\x}h_{\y}(\D)\mathbf{1}=\Big(h_{\x}(\D)\mathbf{1}\Big)\cdot\Big(h_{\y}(\D)\mathbf{1}\Big)\\
&=\Big(\F\s_{h_{\x}}\Big)\cdot\Big(\F\s_{h_{\y}}\Big)\\
&=\F\x\cdot\F\y
\end{align*}
Finally we check that the system $h(\A)$ is a convolution operator. 
\begin{align*}
h(\A)\x&=\F^{-1}h(\D)\F\x=\Big(\F^{-1}h(\D)\F\Big)\s_{h_{\x}}\\
&=\F^{-1}hh_{\x}(\D)\mathbf{1}=\s_{hh_{\x}}\\
&=\s_{h}\ostar \x
\end{align*}
\end{proof}

\begin{remark}
With the notation of Theorem \ref{con-thm}, let \(\mathbf{d}_h\) be the signal obtained by the diagonal entries of the matrix \(h(\mathbf{D})\), forming the set of its eigenvalues. As mentioned earlier, the columns of \(\mathbf{V} = \mathbf{F}^{-1}\) constitute the eigenvectors of \(h(\mathbf{D})\). One may directly see that,
\begin{equation}
\s_h=\V\mathbf{d}_h    
\end{equation}
\end{remark}

\begin{example}
\label{ex6}
Consider the line digraph \(\G\) with $N$ vertices  and its admissible extension, the cycle digraph \(\G_w\), which is formed by connecting the sink to the source with a positive weight \(w\). 
We compute the convolution product \(\ostar\) on \(\G_w\). As demonstrated in Example \ref{ex10}, \(\F_w=\V^{-1}_w\) serves as the GFT on \(\G_w\), where \(\V_w\) is defined in (\ref{f9}). For given unit impulse signals \(\delta[m]\) and \(\delta[n]\), we have:
{\small\begin{align*}
\delta[m] \ostar \delta[n]& =\F_w^{-1}\Big(\F_w\delta[m]\cdot\F_w\delta[n]\Big)\\
&=\mathbf{E}_w\FF^{-1}\Big(\FF\mathbf{E}^{-1}_w\delta[m]\cdot\FF\mathbf{E}^{-1}_w\delta[n]\Big)
\\&=\mathbf{E}_w\FF^{-1}\Bigg(\FF\Big(\mathbf{E}^{-1}_w\delta[m] \ast \mathbf{E}^{-1}_w\delta[n]\Big)\Bigg)
\\&=\mathbf{E}_w\Big(\mathbf{E}^{-1}_w\delta[m] \ast \mathbf{E}^{-1}_w\delta[n]\Big)
\\&=\mathbf{E}_w\Big(w^{-\frac{m}{N}}\delta[m] \ast w^{-\frac{n}{N}}\delta[n]\Big)
\\&=w^{-\frac{m+n}{N}}\mathbf{E}_w\Big(\delta[m] \ast \delta[n]\Big)=\sqrt{N}w^{-\frac{m+n}{N}}\mathbf{E}_w\delta[m+n]
\\&=\sqrt{N}\delta[m+n]=\delta[m] \ast \delta[n]
\end{align*}}
where \(m+n\) is taken modulo \(N\).

Since any signal can be represented as a linear combination of unit impulse signals, the convolution product \(\ostar\) on the directed graph \(\G_w\) becomes the same as the conventional convolution product \(\ast\).
\end{example}

To determine the optimal admissible extension for the line digraph with \(N\) vertices, we can combine the insights obtained from Examples \ref{ex10} and \ref{ex6}. These examples demonstrate that the cyclic digraph \(\mathbb{Z}_N\) is the most suitable admissible extension for the line digraph.

\section{Application  : Friendship Digraph} 
Upon examining the discourse leading to equation (\ref{total}), it becomes apparent that the number of admissible extensions for a digraph with a singular adjacency matrix can potentially be quite substantial. 
 This finding is further supported a by real-world data sets, given in subsequent section, which demonstrate that the actual number of admissible extensions far surpasses what one might anticipate at the outset.  Recognizing the abundance of admissible extensions, identifying the optimal choice that caters to the targeted goal based on the specific data set under examination is highly beneficial. 
\subsection{ Friendship Digraph in a High School in Illinois}
\begin{figure}[tbp]
\centering
\begin{minipage}{0.48\textwidth}
    \centering
    \includegraphics[width=0.7\textwidth,height=0.2\textheight]{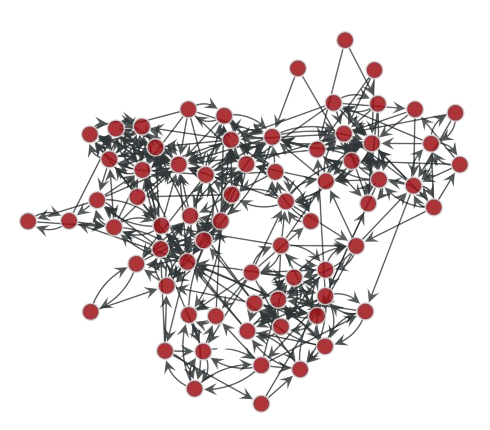}
    \caption{Friendship digraph $\G_{\text{f}}$ with 70 vertices.}
    \label{Friendship}
\end{minipage}\hfill
\begin{minipage}{0.48\textwidth}
    \centering
    \includegraphics[width=0.75\textwidth,height=0.25\textheight]{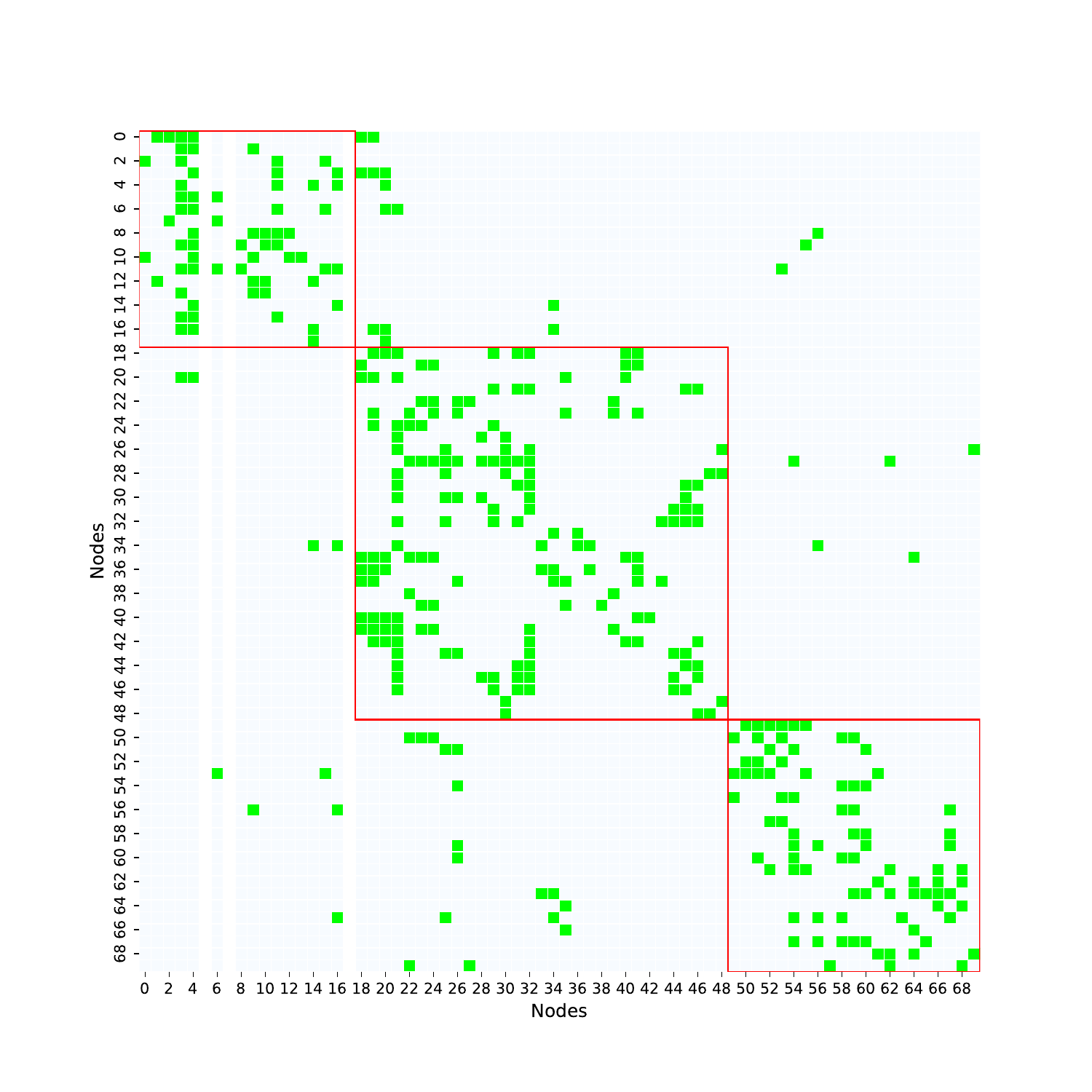}
    \caption{Adjacency matrix associated with Friendship digraph $\G_{\text{f}}$ clustered into three clusters using spectral clustering.}
    \label{HeatmapAdjmatrix}
\end{minipage}
\end{figure}

\begin{table}[tbp]
    \centering
    \caption{Minimum and Maximum Values of $(\delta_{\V}, \Delta_{\V})$ and Condition Number for $w=1$ and $w=0.01$}
    \label{tab:delta_Delta_cond}
    
    \begin{tabular}{ccccc}
        \toprule
        \textbf{Weight} & & $\delta_{\V}$ & $\Delta_{\V}$ & \textbf{Condition Number} \\
        \midrule
        \multirow{2}{*}{$w=1$} & Minimum & $0.0359$  & $0.1707$ & $39.011$ \\
                               & Maximum & $0.1195$  & $0.3165$ & $5.021e+16$ \\
        \midrule
        \multirow{2}{*}{$w=0.01$} & Minimum & $0.00108$ & $0.00395$ & $427.899$ \\
                                  & Maximum & $0.00119$ & $0.00457$ & $3.181e+15$ \\
        \bottomrule
    \end{tabular}
\end{table}

The digraph in Figure \ref{Friendship} shows the friendship connections among boys at a small high school in Illinois, which has 70 students. We will denote this digraph as $\G_{\text{f}}$.
Each boy was surveyed once in the fall of 1957 and again in the spring of 1958. The data set\footnote{https://networks.skewed.de/net/highschool} consolidates the results from the initial survey. In this network, a node represents a boy, and an edge between two nodes indicates that the boy on the left selected the boy on the right as a friend. 
The adjacency matrix is of rank 66, signifying its singular nature. 
Consequently, we must firstly identify and incorporate 4 additional edges into the digraph to successfully dismantle all Jordan blocks linked to the eigenvalue 0. To enhance our visualization, the heat map of the adjacency matrix (Figure \ref{HeatmapAdjmatrix}) is arranged into three distinct clusters using the spectral clustering algorithm. The digraph contains three sinks, which are highlighted in the heat map. Each sink must be incorporated into every column dependency list.
 There are only 5 column dependency lists in total, but there are $15495$ different row dependency lists. Therefore, through this approach, the total number of ways to make the initial adjacency matrix non-singular is as follows:
\[
\text{Total}_{\text{Inv}} = 15,495 \times 5 \times 4! = 1,859,400
\]
By avoiding multiple edges, 596,713 different cases of admissible extensions are emerged.

\begin{remark}
\label{lem}
It is worth mentioning that while the adjacency matrix lacks a source, it does contain two duplicate rows. Interestingly, the exclusion of these duplicates does not impact the rank of the matrix. Additionally, there exist only 1555 row dependency lists, which encompass the two duplicate rows, referred to as D-Rows. The numerical findings suggest that, to attain the desired outcome, it is adequate to limit our consideration to instances where pseudo-permutations are constructed from row dependency lists derived from D-Rows. Within this context, we focus on a total of 7775 non-singular extensions, with 7243 of these cases constituting admissible extensions avoiding multiple edges. Henceforth referred to as the {\it 7243-sample set} in the remainder of this discussion.
\end{remark}

\subsubsection{Analyzing Eigenvector Approximation}
We begin by examining the indices $(\delta,\Delta)$ indices  (see (\ref{F_delta}))  derived from the graph Fourier transforms on the adjacency matrices in the $7243$-sample set, without considering any weights on the newly added edges.
The datas presented in Table \ref{tab:delta_Delta_cond} indicate a high degree of spectral compatibility. However, there is a noticeable variation in the condition numbers of these transforms, which warrants further investigation.

The distribution of these indices is illustrated in Figures \ref{cond_unweighted}. A similar analysis is presented in Table \ref{tab:delta_Delta_cond}, which focus on the scenario where new edges are introduced with a fixed weight of $w=0.01$.

The high condition numbers observed when the newly added edges are assigned a weight of $0.01$, along with the small indices of $\delta_{\V}$ and $\Delta_{\V}$ when no weight is assigned, serve as compelling evidence in favor of implementing the $7243$-sample set without considering any weight. This decision is based on the analysis of the provided data, which suggests that this approach yields.

\begin{figure}[tbp]
\centering
\includegraphics[width=0.425\textwidth,height=0.17\textheight]{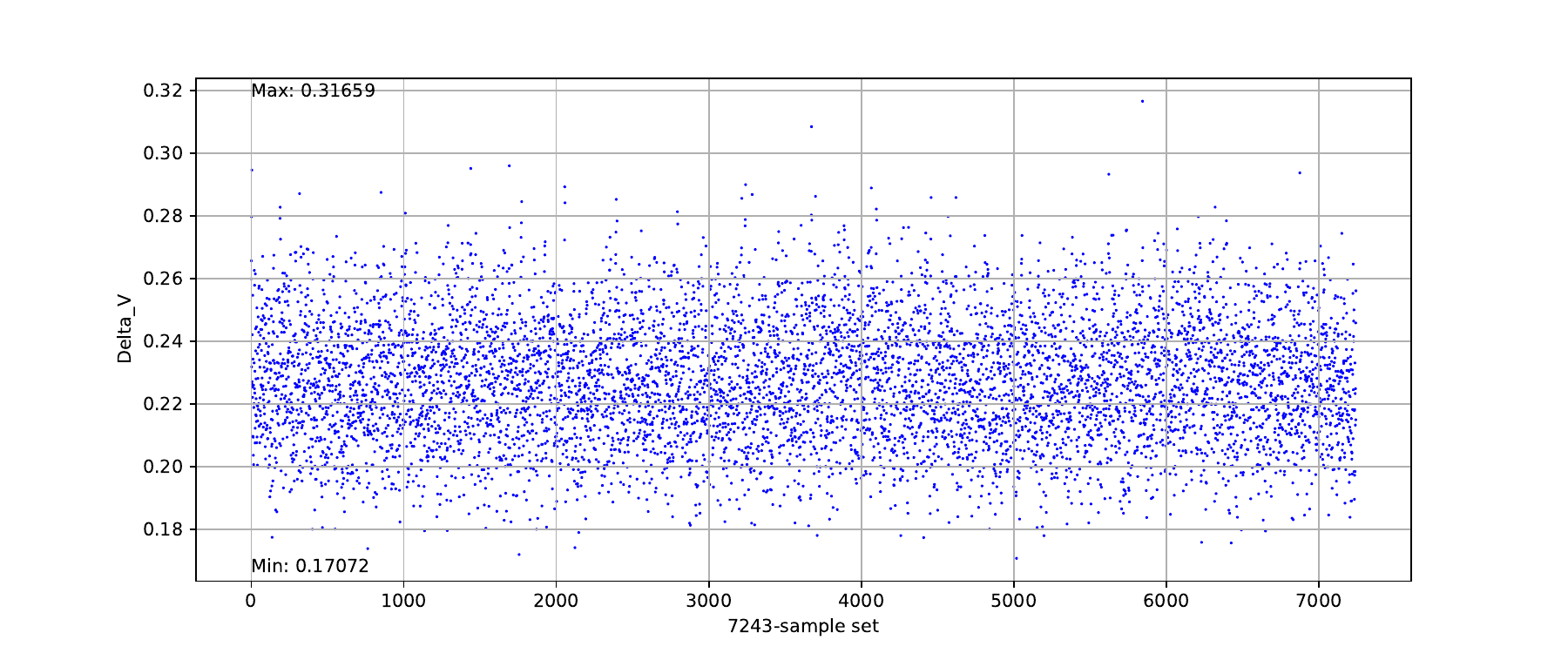}
\\ (a) \\
\includegraphics[width=0.425\textwidth,height=0.17\textheight]{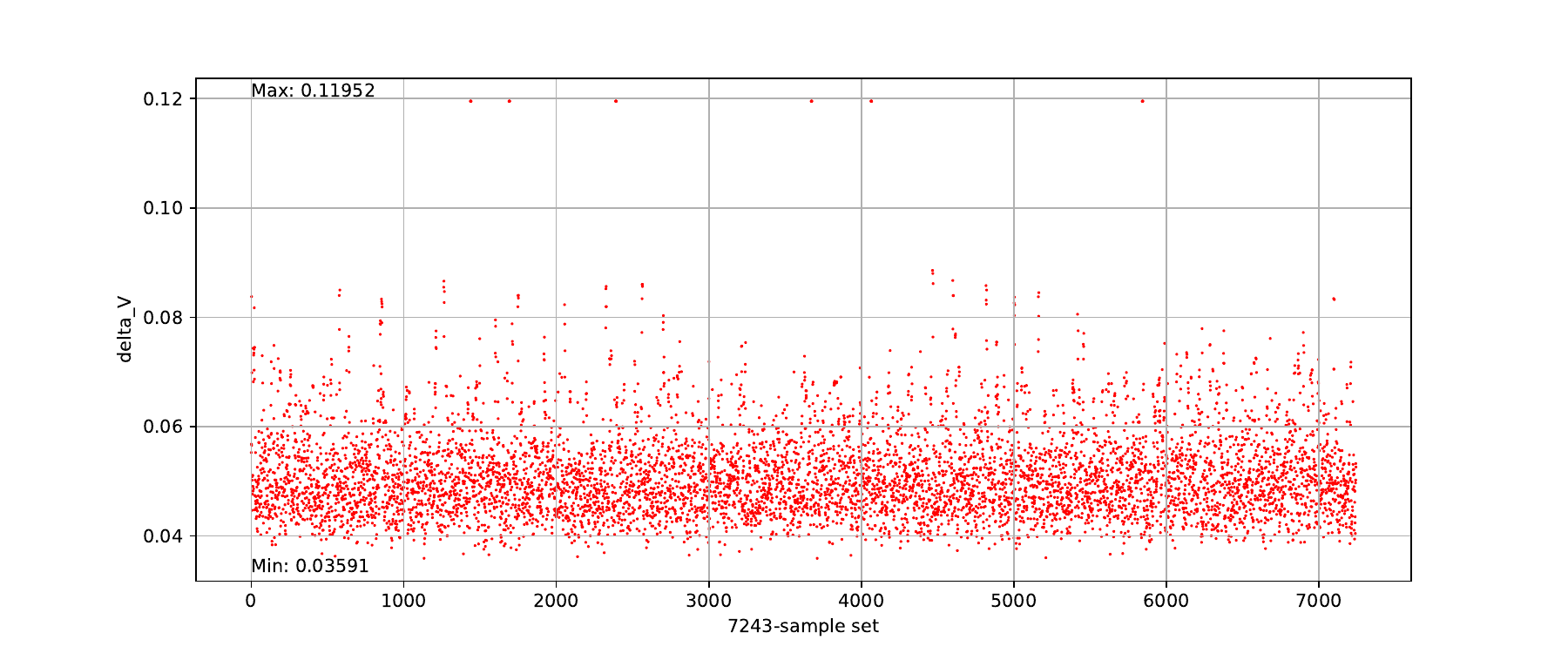}
\\ (b) \\
\includegraphics[width=0.425\textwidth,height=0.17\textheight]{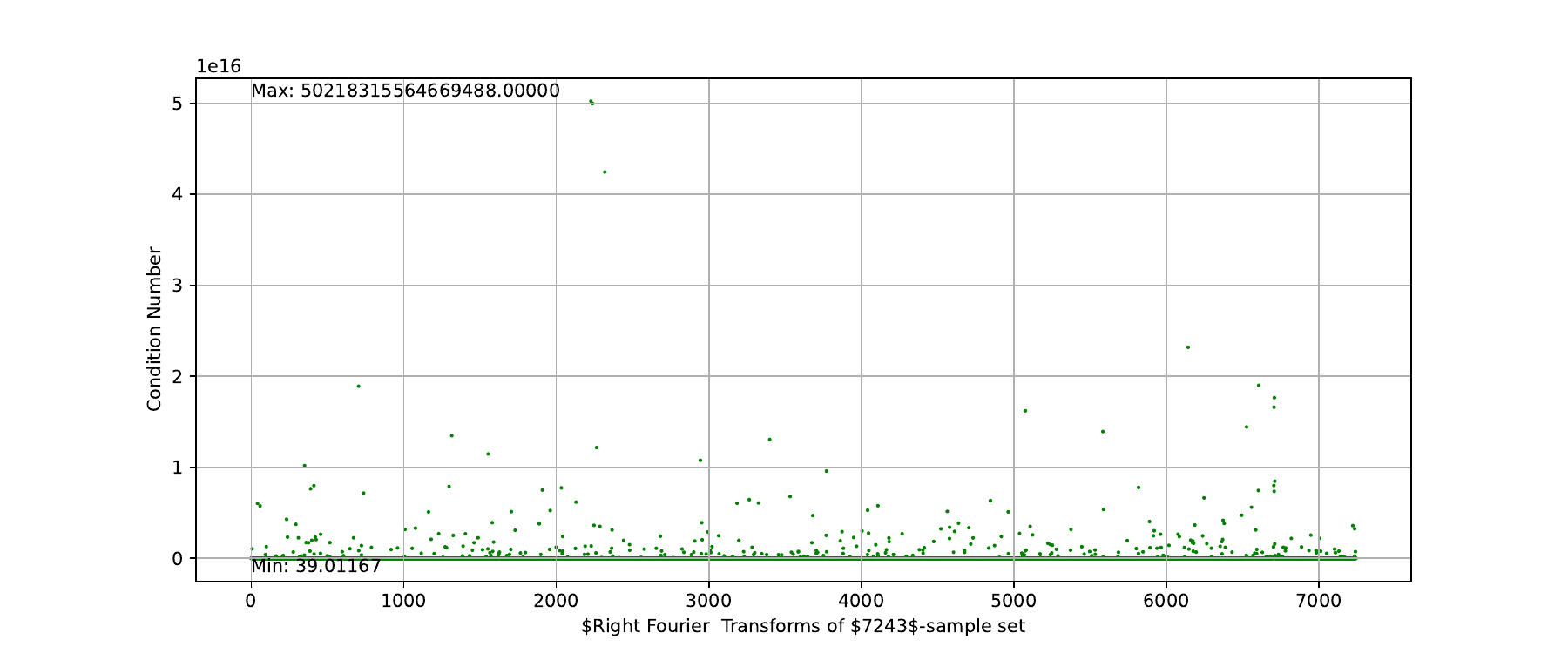}
\\ (c)
\caption{  Figures (a), (b), and (c) illustrate the distribution of $(\delta,\Delta)$ indices and condition numbers of the GFTs obtained from the $7243$-sample set, without taking any weights into account.}
\label{cond_unweighted}
\end{figure}

\subsubsection{Structure Compatibility}
As the structural integrity of the original digraph is increasingly preserved upon the addition of new edges, the graph Fourier transform logically  becomes a more appropriate choice for processing graph signals.

In the subsequent discourse, we propose a framework for determining suitable admissible extensions of the given digraph that prioritize maintaining the structural integrity between the original graph and its extended version.

\begin{remark}
In the process of comparing social networks, numerous indices are frequently employed, including centrality measures such as PageRank, and additional metrics like the Global Clustering Coefficient,  Network Motif Density, and Number of Core/Periphery nodes. By applying these metrics to the $7243$-sample set, a filtering mechanism is established to identify the most appropriate admissible extensions that exhibit the highest structural integrity with the Friendship digraph. This enables the preservation of essential properties and characteristics while extending the network. Brief explanations of these metrics are provided in the Appendix section.
\end{remark}

\begin{figure}[tbp]
\centering
\includegraphics[width=0.425\textwidth,height=0.17\textheight]{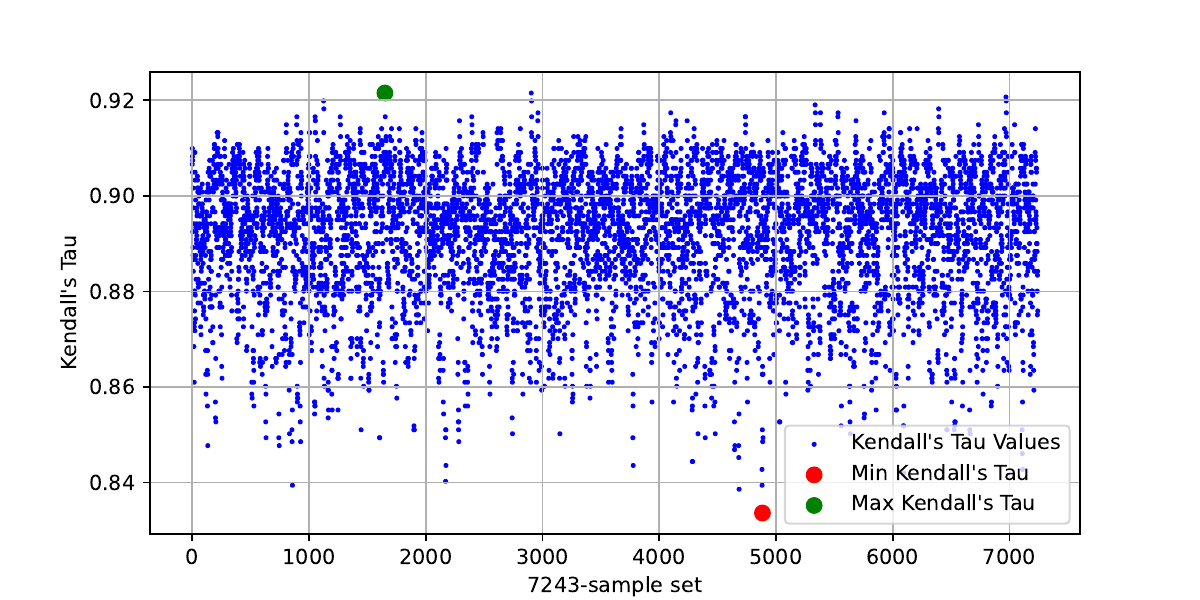}
\caption{ The distribution of Kendall's Tau indices corresponding to the $7243$-sample set reflects the rate of changes in the PageRanks of nodes due to reshuffling after adding new edges, without considering any weight.}
\label{Kendall}
\end{figure} 

{\it Filter 1: PageRank-based Filtering.} To initiate our analysis, we prioritize the PageRank criterion, a fundamental measure of node significance within networks. The Kendall's Tau index informs us about the rate of changes in the PageRanks of nodes that occur due to reshuffling after adding new edges. 
The distribution of Kendall's Tau indices corresponding to the 7243-sample set is depicted in Figure \ref{Kendall}. 
\begin{figure}[tbp]
\centering
\includegraphics[width=0.425\textwidth,height=0.17\textheight]{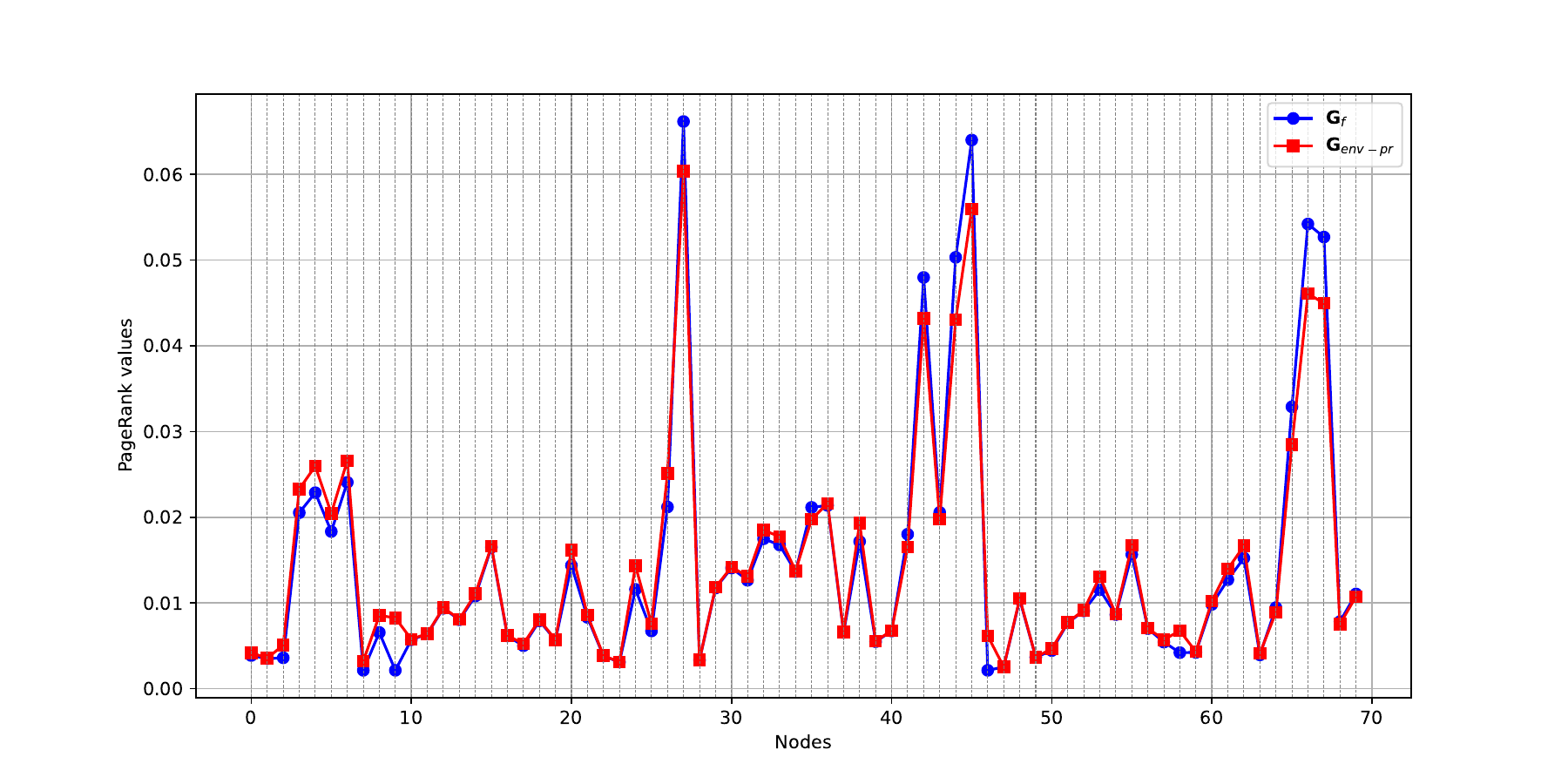}
\caption{  Comparison of PageRanks for digraphs $\G_{\text{f}}$ and $\G_{\text{env-pr}}$.}
\label{PageRank_Comparison}
\end{figure}
\begin{table}[tbp]
    \centering
    \caption{Within the 7243-sample set, \(\G_{\text{Max-Tau}}\) denotes the digraph exhibiting the highest Kendall's Tau index. Among the list of envelope Friendship-digraphs, \(\G_{\text{env-pr}}\) represents the digraph with the maximum Kendall's Tau index. The condition numbers corresponding to the Graph Fourier Transforms of these digraphs are provided in the third column.
}
    \label{tab:kendall_tau_value}
    \begin{tabular}{ccc}
        \toprule
        \textbf{Indices} & \textbf{Kendall's Tau value} & \textbf{Condition number} \\
        \midrule
        $\G_{\text{Max-Tau}}$ & $0.92148$ & $81062064.706$ \\
        $\G_{\text{env-pr}}$ & $0.914854$ & $69.14729$ \\
        \bottomrule
    \end{tabular} 
\end{table}

We focus on those admissible digraphs that satisfy two specific conditions: a Kendall's Tau index of at least $0.91$ and a condition number of $80$ or less. These criteria ensure a balance between spectral and structural integrity in relation to the original Friendship digraph.

Out of the admissible digraphs in the $7243$-sample set, $223$ possess a Kendall's Tau index of $0.91$ or higher, indicating strong concordance with the original graph's ordering. However, only $7$ of these digraphs meet the condition number threshold of $80$ or less, which is crucial for maintaining numerical stability and ensuring well-posedness in spectral analysis. We refer to these $7$ digraphs as {\it envelope Friendship-digraphs}. 

Within the 7243-sample set, the digraph exhibiting the highest Kendall's Tau index is designated as \(\G_{\text{Max-Tau}}\), while \(\G_{\text{env-pr}}\) represents the digraph that achieves the maximum Kendall's Tau index among all envelope Friendship-digraphs.  In Figure \ref{PageRank_Comparison}, PageRanks of the digraphs $\G_{\text{f}}$ and $\G_{\text{env-pr}}$ are compared together.

 The findings in Table \ref{tab:kendall_tau_value} underscore the importance of achieving a balance between condition number  and structural compatibility when selecting the optimal graph Fourier transform for the Friendship digraph. 

{\it Filter 2: Motif Density Filtering.} 
Motif Density assesses the frequency and prevalence of recurring significant patterns or motifs within a graph structure. 
As shown in Figure \ref{fig:motif}, a comparative analysis of the network motif density indices for the envelope Friendship-digraphs and the Friendship-digraph index reveals a striking similarity. In order to facilitate a more comprehensive analysis, an additional digraph, termed \(\G_{\text{env-motif}}\), is introduced and highlighted in the plot using a yellow marker.

\begin{figure}[tbp]
\centering
\includegraphics[width=0.425\textwidth,height=0.17\textheight]{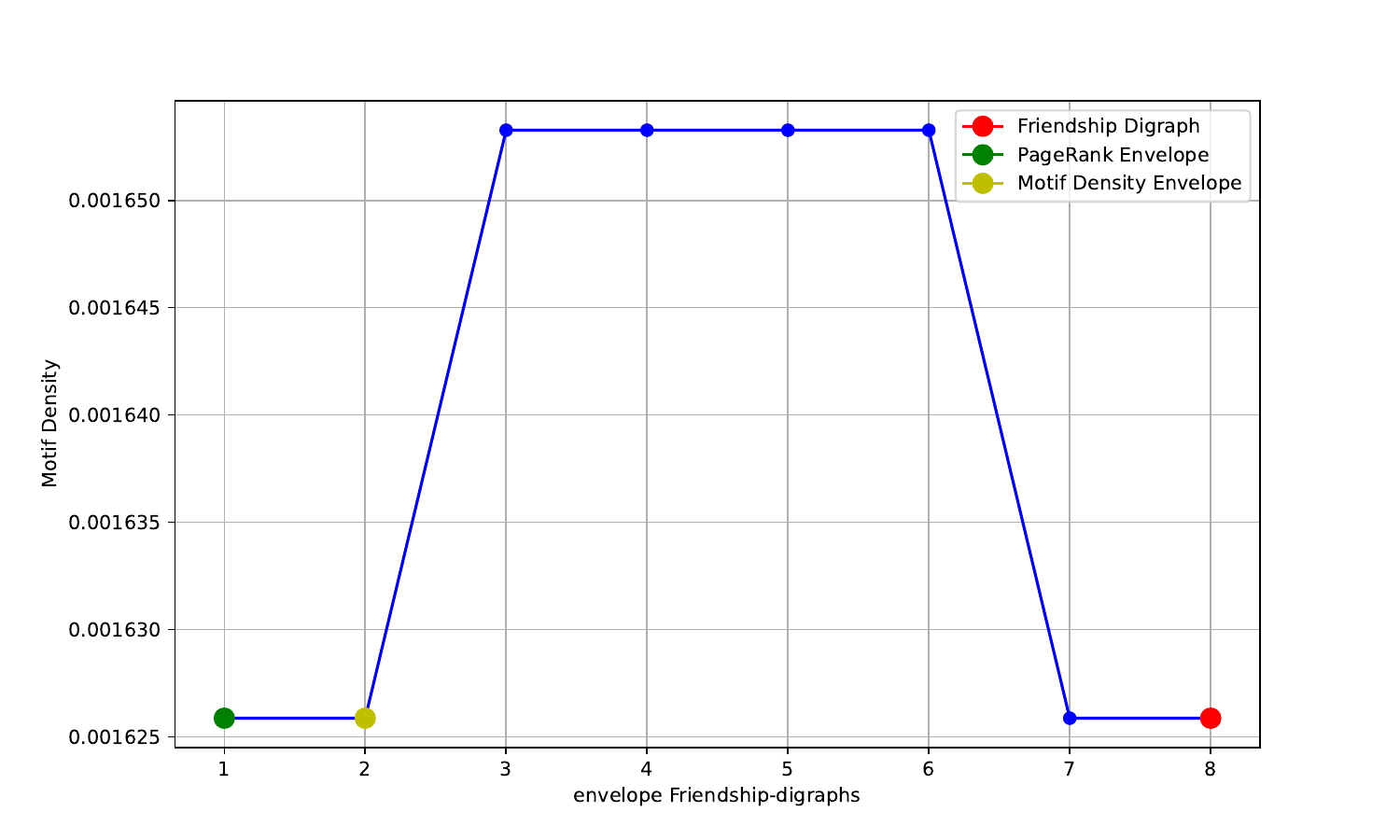}
\caption{The Motif density indices of $\G_{\text{f}}$ and envelope Friendship digraphs.}
\label{fig:motif}
\end{figure}

{\it Filter 2: Core and Periphery counts Filtering.} 
Core-Periphery counts evaluate the number of nodes that can be classified as either core or periphery within a graph structure. Core nodes are highly interconnected and central, while periphery nodes are less connected and occupy the outer layers of the network. In Figure \ref{Coreperiphery}, the Core and Periphery counts for the digraphs within the envelope Friendship-digraphs list are provided. The indices for the digraphs and notable digraphs such as $\G_{\text{f}}$, $\G_{\text{env-pr}}$, $\G_{\text{env-motif}}$, and a newly introduced $\G_{\text{env-cor}}$ are highlighted for emphasis.

\begin{figure}[tbp]
\centering
\includegraphics[width=0.425\textwidth,height=0.17\textheight]
{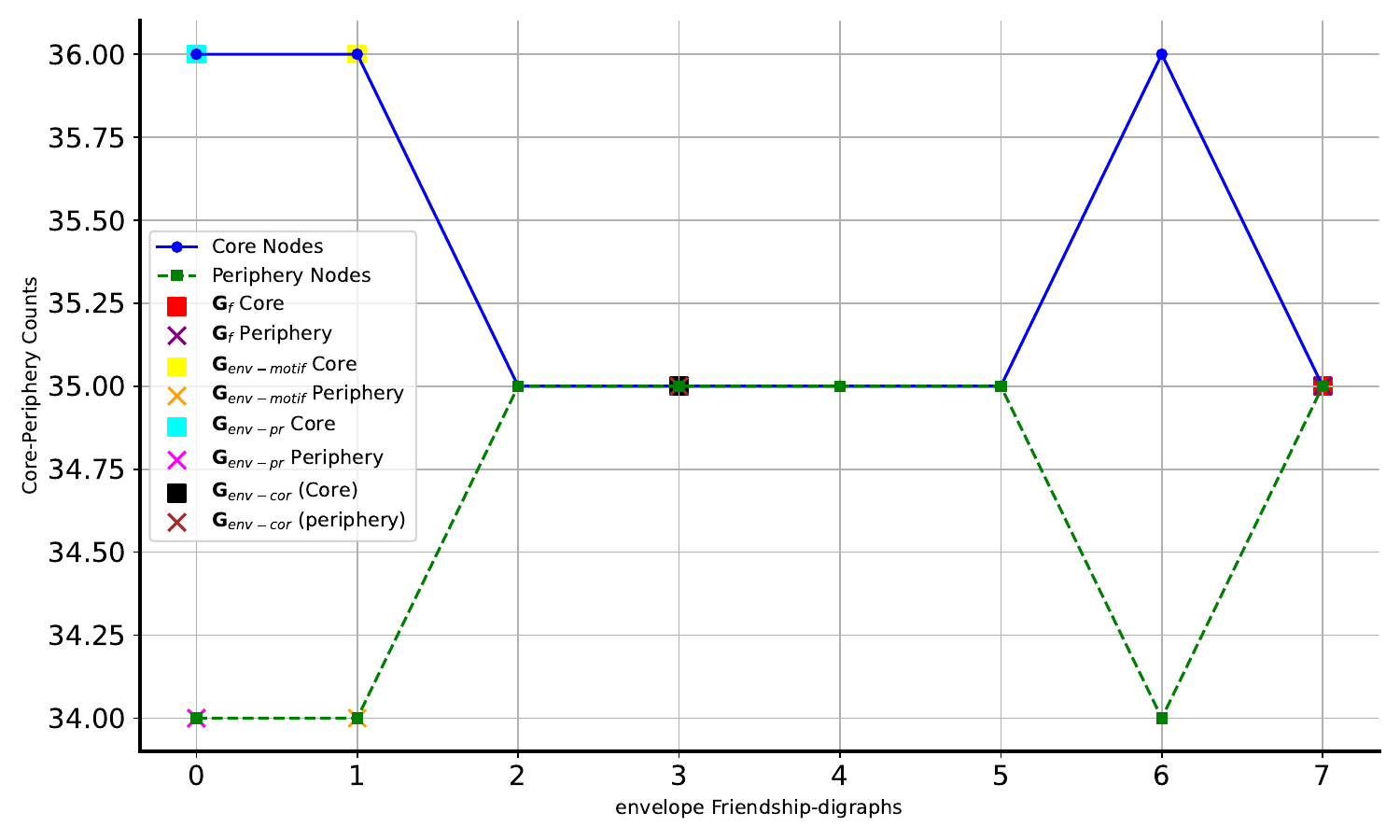}
\caption{Comparison of Core and Periphery Node Counts in $\G_{\text{f}}$ and Envelope Friendship Digraphs}
\label{Coreperiphery}
\end{figure}

{\it Local clustering coefficients.}   The local clustering coefficient of a node in a graph measures how close its neighbors are to being a complete subgraph (clique). It assesses the extent to which nodes tend to cluster together locally within their immediate neighborhoods.
As demonstrated in Figure \ref{Local_clustering_coefficients}, a comparative analysis of the local clustering coefficients for the notable digraphs $\G_{\text{f}}$, $\G_{\text{env-pr}}$, $\G_{\text{env-motif}}$, and $\G_{\text{env-cor}}$ is conducted, offering valuable insights into their integrity.
\begin{figure}[tbp] 
\centering
\includegraphics[width=0.425\textwidth,height=0.17\textheight]
{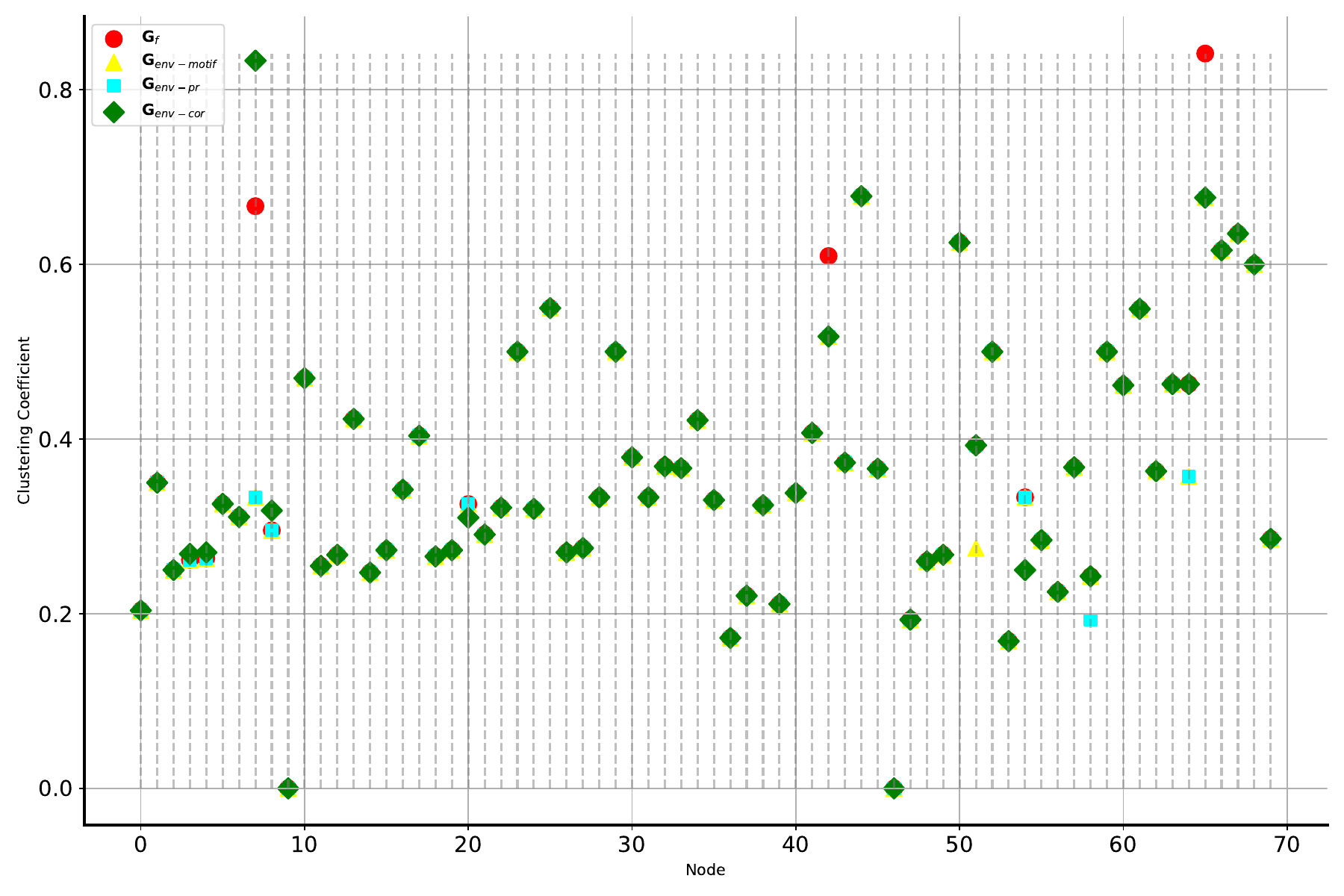}
\caption{Comparing Local Clustering Coefficients in $\G_{\text{f}}$, $\G_{\text{env-pr}}$, $\G_{\text{env-motif}}$, and $\G_{\text{env-cor}}$.}
\label{Local_clustering_coefficients}
\end{figure}

{\it Numerical stability :  envelope Friendship-digraphs.}
To assess numerical stability, we compute the operator norms of $(\F \F^{-1} - \mathbf{I})$ and $(\F^{-1} \F - \mathbf{I})$, where $\mathbf{F}$ denotes the GFT of the envelope Friendship-digraphs. This computation evaluates the deviation between the product of inverse GFTs and the identity matrix, providing a thorough analysis of numerical stability. 
The results are illustrated in Figure \ref{LR_RL}.

\begin{figure}[tbp]
\centering
\includegraphics[width=0.425\textwidth,height=0.17\textheight]{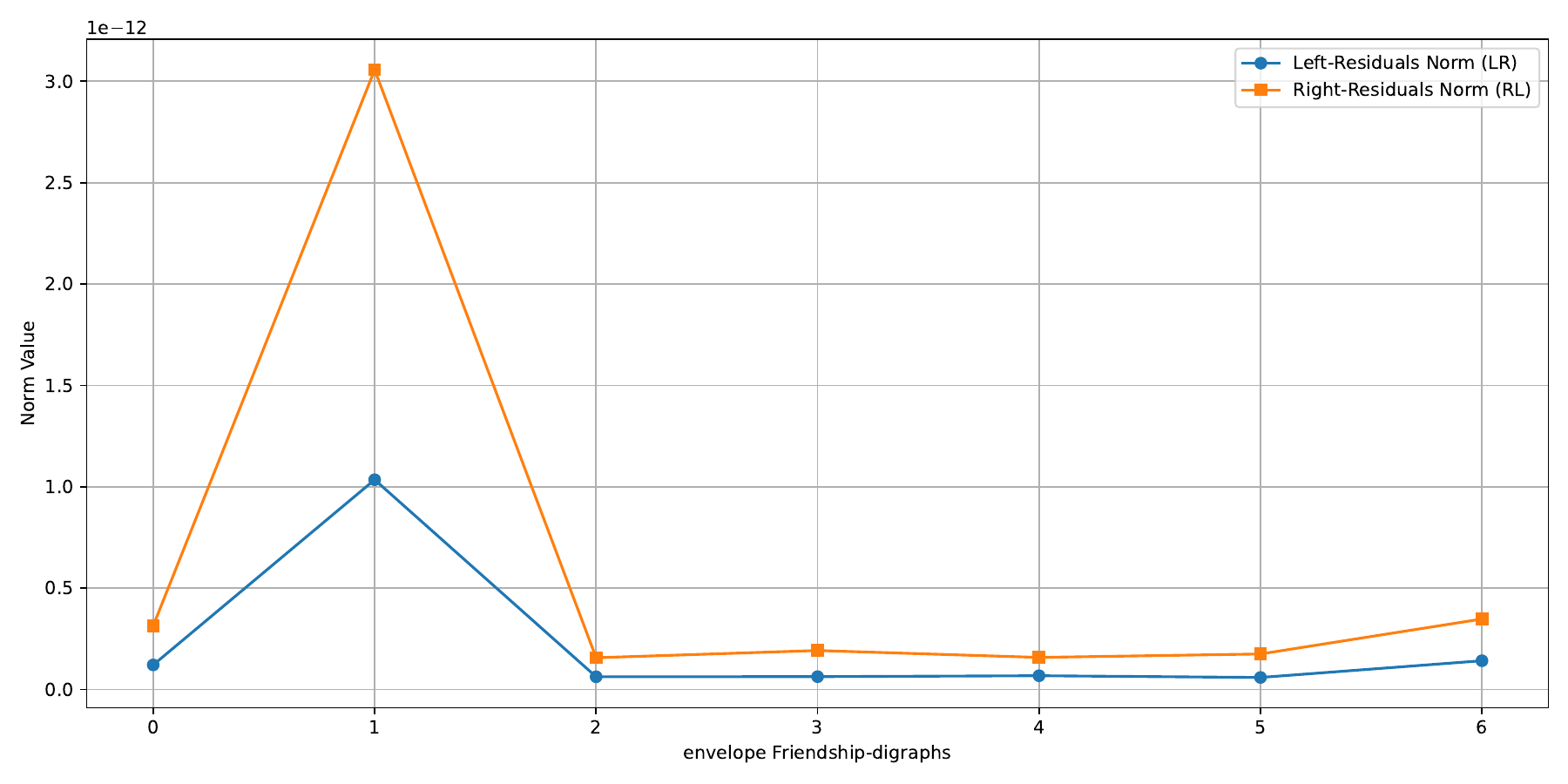}
\caption{This figure illustrates the operator  norms of $(\F \mathbf{F}^{-1} - \mathbf{I})$ and $(\mathbf{F}^{-1} \F - \mathbf{I})$ for GFT of envelope Friendship digraphs.}
\label{LR_RL}
\end{figure}

\subsection{Systems : Target envelopes}
The methodology involves calculating three distinct graph Fourier transforms, referred to as \textit{target envelopes}, specifically $\G_{\text{env-pr}}, \G_{\text{env-cor}}, \G_{\text{env-motif}}$, for the Friendship digraph $\G_{\text{f}}$. These target envelopes introduce the GFTs $\mathbf{F}_{\text{pr}}, \mathbf{F}_{\text{cor}}, \mathbf{F}_{\text{motif}}$ for $\G_{\text{f}}$, respectively. Each of these transforms establishes a specific graph Fourier basis on \(\G_{\text{f}}\). Our subsequent analysis focuses on checking and interpreting the systems induced on $\G_{\text{f}}$ by these target envelopes.

Before delving into this analysis, Figure  (\ref{fig:Cor_heatmap}) depict the heat maps of the corresponding adjacency matrices $\mathbf{A}_{\text{pr}}, \mathbf{A}_{\text{cor}}, \mathbf{A}_{\text{motif}}$ of these three envelope targets, highlighting the newly added edges in each one. The list of newly added edges is also addressed in Table \ref{tab:newly_added}.  The heatmaps presented in Figure \ref{fig:magnitude_comparisons} provide a visual comparison of the spectral characteristics of the graphs for the three target envelopes. These heatmaps illustrate the magnitude differences between the eigenvectors of the adjacency matrices corresponding to the target envelope digraphs, offering a comprehensive representation of their spectral properties and allowing for an intuitive understanding of their similarities.

\begin{table}[tbp]
    \caption{The newly added edges to the Friendship-digraph, leading to the formation of three target envelopes. The notation \((r, c)\) signifies the edge connecting node \( r \) to node \( c \).}
        \centering
    \label{tab:newly_added}
    \begin{tabular}{l|c|c|c|c}
        \toprule
        & \textbf{(r,c)} & \textbf{(r,c)} & \textbf{(r,c)} & \textbf{(r,c)} \\ 
        \midrule
        $\G_{\text{env-pr}}$ & (43,5) & (29,7) & (44,17) & (7,37)  \\ 
        \hline
        $\G_{\text{env-motif}}$ & (43,5) & (29,7) & (44,17) & (7,66)  \\ 
        \hline
        $\G_{\text{env-cor}}$ & (11,5)  & (29,7)  & (44,17)  & (7,68)  \\ 
        \bottomrule
    \end{tabular}
\end{table}
\begin{figure}[htbp]
    \centering
   \includegraphics[width=0.47\textwidth, height=0.4\textheight]{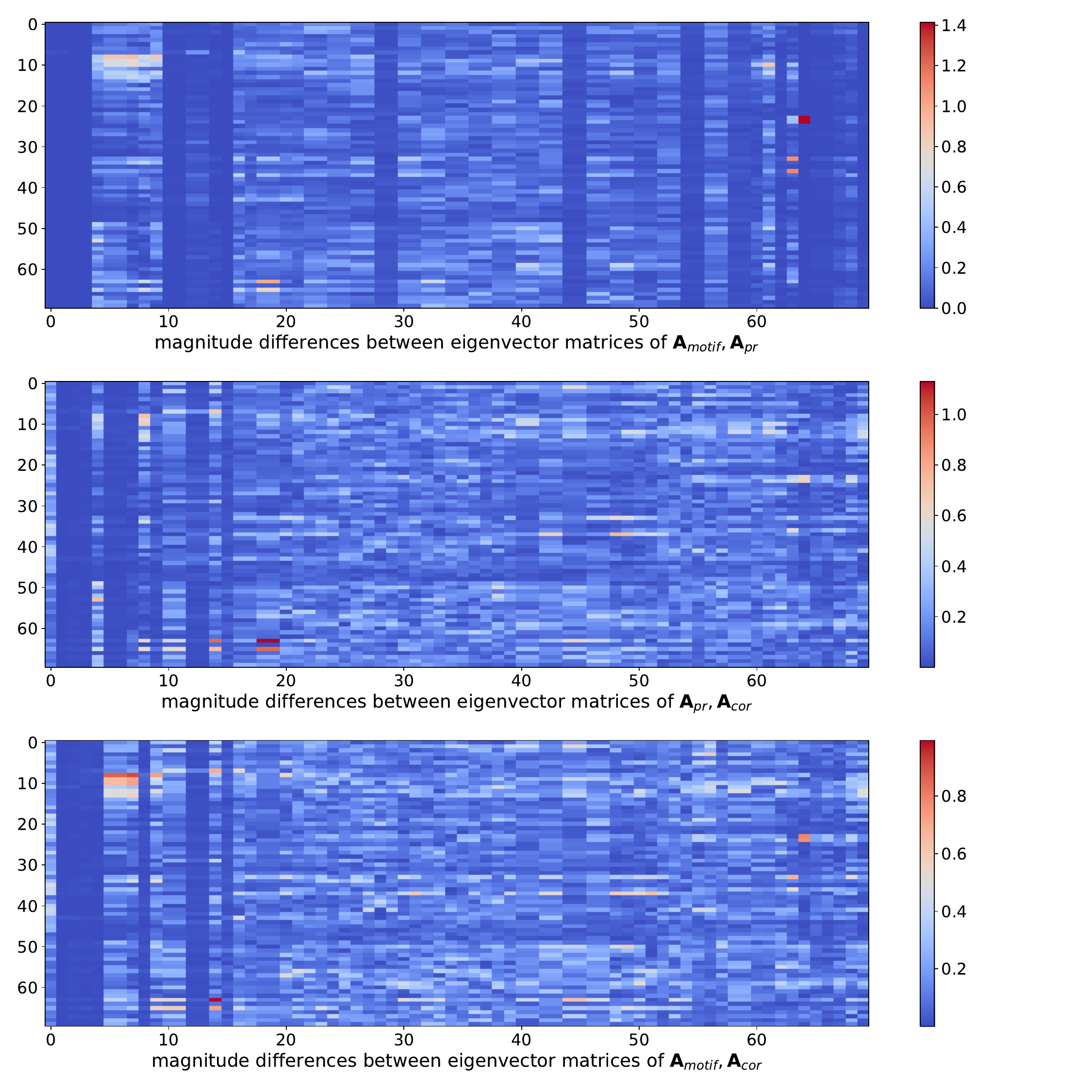}
    \caption{The figures visualize the magnitude differences between columns for any given pair of eigenvector matrices of the adjacency matrices corresponding to the target envelope digraphs.}
    \label{fig:magnitude_comparisons}
\end{figure}

\begin{figure}[tbp]
    \centering
       \includegraphics[height=0.29\textheight]{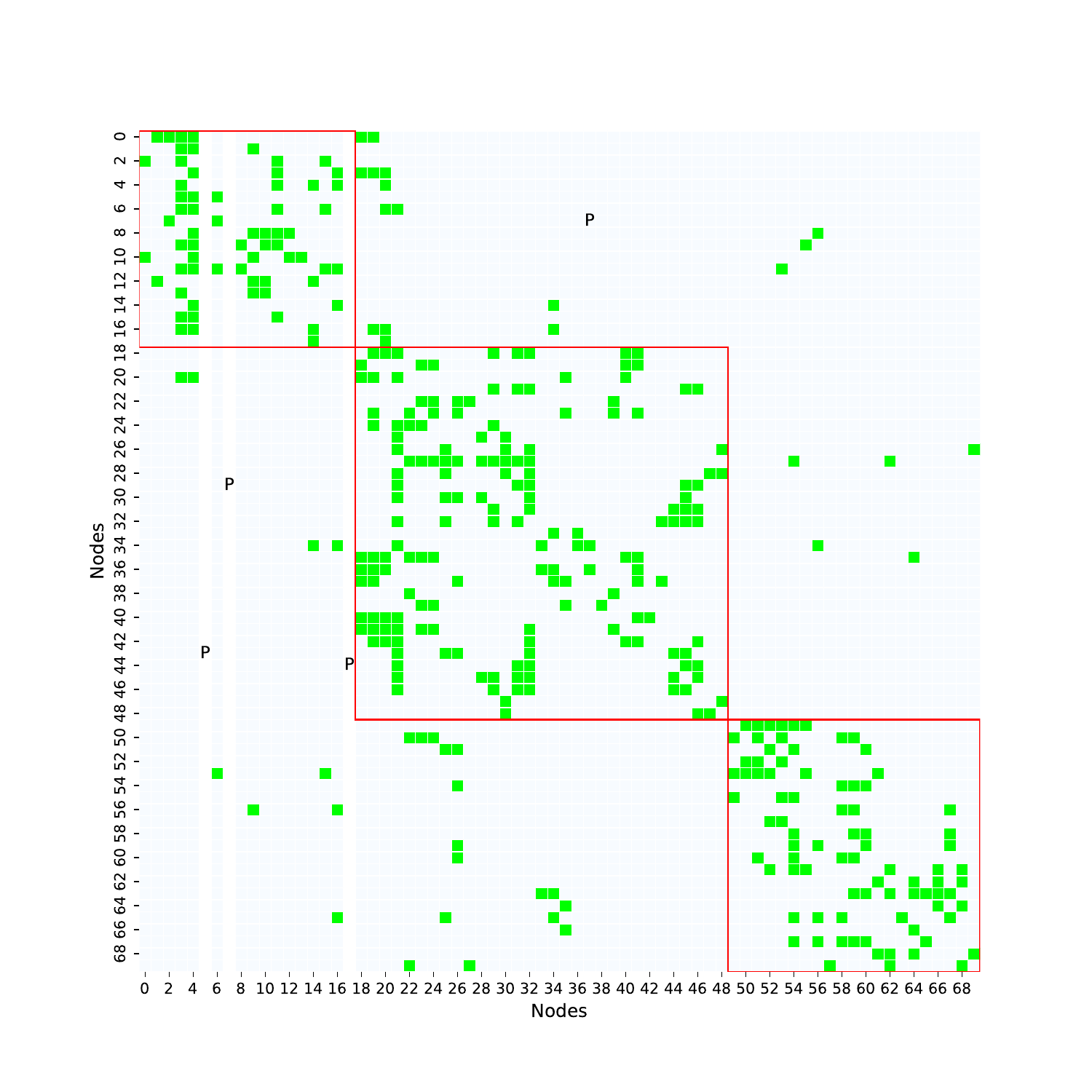} \\ (a) 
       
        \label{fig:pagerank_heatmap}
    \includegraphics[height=0.29\textheight]{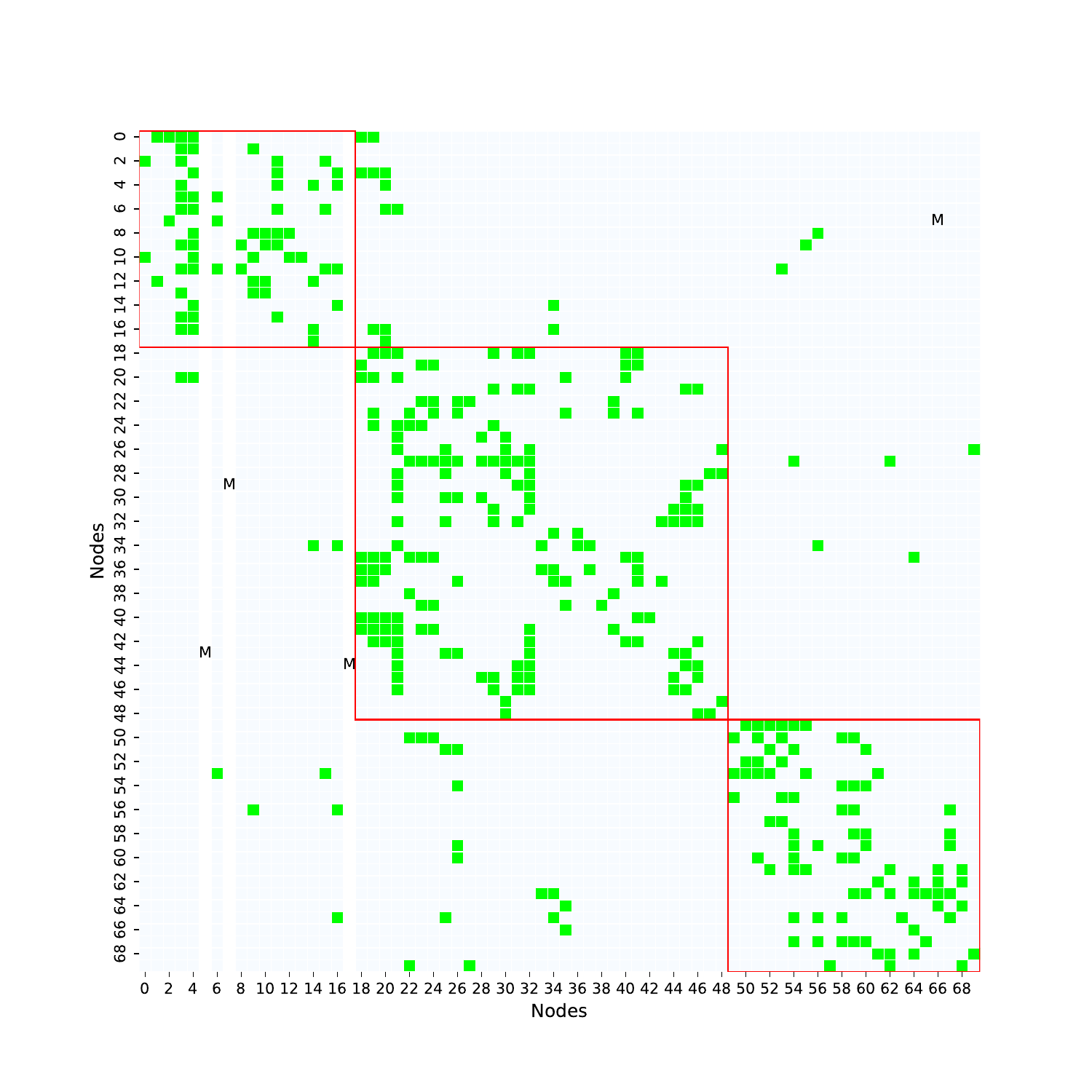}\\(b)\\
       \label{fig:Motif_density_heatmap}
   \includegraphics[height=0.29\textheight]{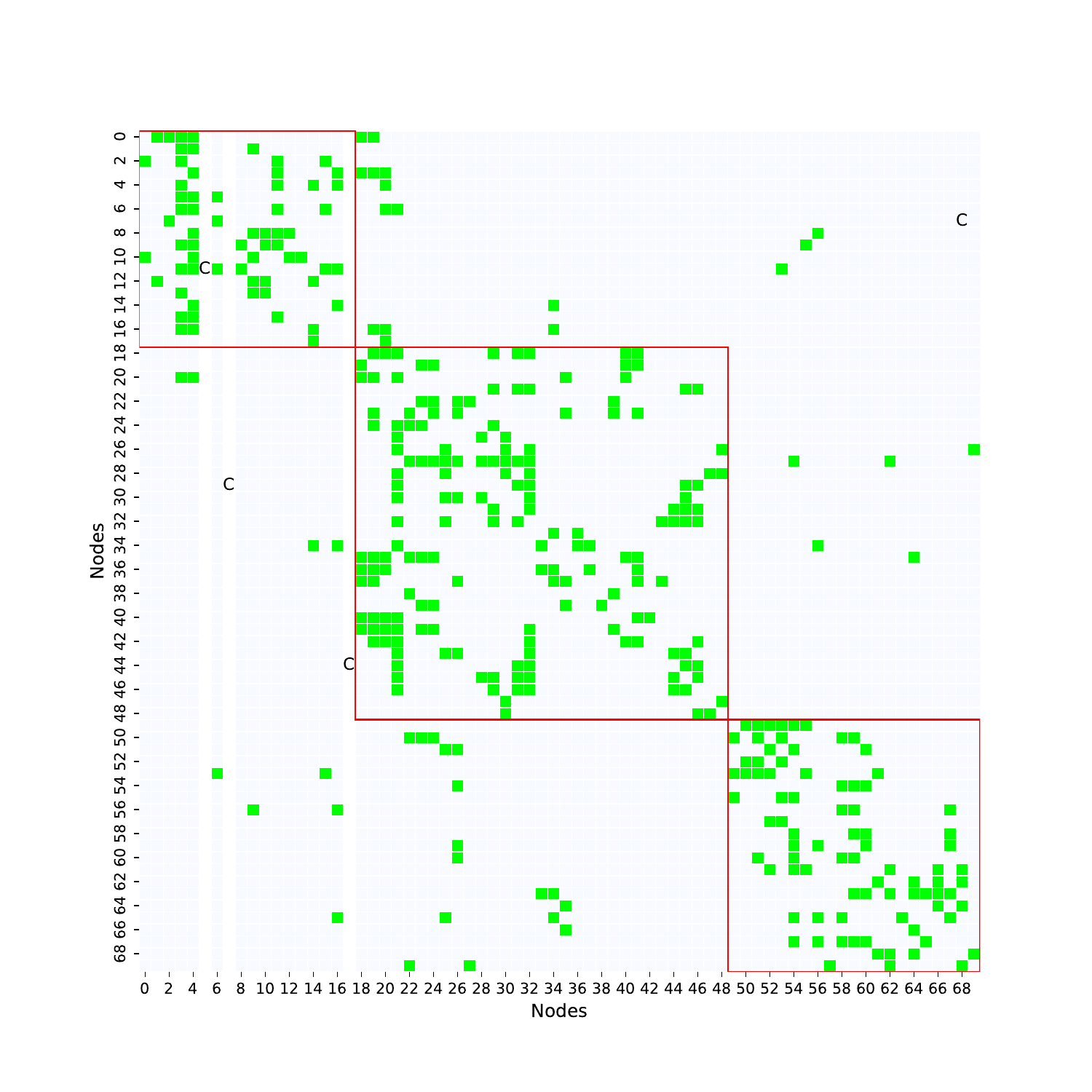}
   \\ (c)
    \caption{The heatmap representation of the adjacency matrix for the digraphs $\G_{\text{env-pr}}$ (a), $\G_{\text{env-motif}}$ (b), $\G_{\text{env-cor}}$ (c) are shown, where four newly added edges are highlighted with the capital letters P, M and C respectively.}
           \label{fig:Cor_heatmap}
\end{figure}

\begin{figure*}[htbp]
    \centering
\includegraphics[width=0.75\textwidth, height=0.55\textheight]{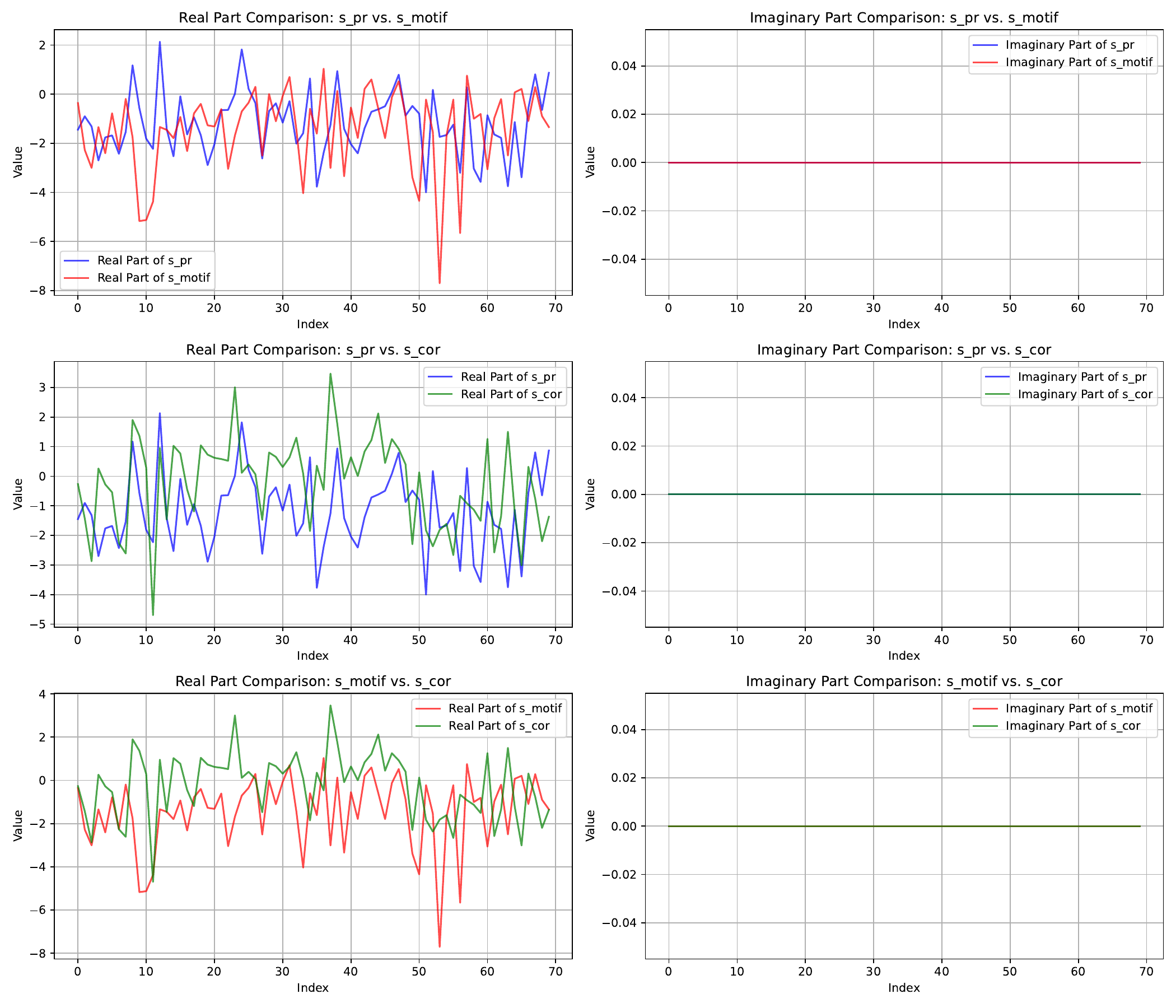}

    \caption{The real and imaginary parts of signals defined in (\ref{con-impulses}) are compared together in the figures.}
    \label{fig:systems}
\end{figure*}

The following graph signals illustrate the systems induced by the target envelopes.
\begin{equation}
\label{con-impulses}
\begin{cases}
\mathbf{s}_{\text{pr}} = \F_{\text{pr}}^{-1}\D_{\text{pr}}\mathbf{1} \\
\mathbf{s}_{\text{cor}} = \F_{\text{cor}}^{-1}\D_{\text{cor}}\mathbf{1} \\
\mathbf{s}_{\text{motif}} = \F_{\text{motif}}^{-1}\D_{\text{motif}}\mathbf{1} 
\end{cases}
\end{equation}
where $\mathbf{D}_{\text{pr}}$, $\mathbf{D}_{\text{cor}}$, and $\mathbf{D}_{\text{motif}}$ are diagonal matrices containing the eigenvalues of $\mathbf{A}_{\text{pr}}$, $\mathbf{A}_{\text{cor}}$, and $\mathbf{A}_{\text{motif}}$, respectively. 
Figures \ref{fig:systems} compare the real and imaginary parts of the signals defined in (\ref{con-impulses}) . 

{\bf Summary.} In summary, the findings of this section are as follows:

i) A filter mechanism was developed to identify appropriate GFTs for the digraph $\G_{\text{f}}$, focusing on achieving maximal spectral and structural compatibility. Consequently, three target envelopes were obtained, and it was observed that the resultant digraphs converged to similar characteristics as the filtering process progressed. The adjacency matrices corresponding to $\G_{\text{env-pr}}$ and $\G_{\text{env-motif}}$ were found to be identical, except for the addition of a single new edge with the same source.

ii) To analyze the impact of these envelopes on the initial digraph $\G_{\text{f}}$, comparing shift-invariant graph filters offers a suitable approach. The systems defined by them on $\G_{\text{f}}$ are characterized by their convolution operators, as established in Theorem \ref{con-thm}.

As illustrated in Figure \ref{fig:systems}, the systems induced by these three convolution operators exhibit sufficiently distinct characteristics. This indicates that even with a certain level of spectral and structural  compatibility, each GFT derived from the envelope digraphs corresponds to a unique case. Consequently, the different GFTs demonstrate distinct behaviors, thereby validating the hypothesis that these GFTs have unique impacts on signal processing within graph-based data frameworks.

\subsection{Conclusion} 
When a real dataset is represented as a digraph, graph signal processing techniques might be employed for signal analysis. One significant contribution is formulating the challenges to obtain a well-structured graph Fourier basis, which is intricately linked to the nature of the graph structure. To achieve the desired target, modifying the graph structure to develop the GFT could be a viable solution.

Admissibility is a minimal property in a digraph that allows the utilization of shift-invariant graph filters for signal analysis, making them highly effective tools.

For a given non-admissible digraph \(\G\), a systematic approach is delineated to identify a class of sophisticated admissible Cayley digraphs \(\text{Cay}(\mathbb{Z}_N, \Gamma)\). This ensures that \(\G\) fits within various admissible envelopes, enabling the application of graph signal processing techniques to \(\G\).

To achieve spectral and structural compatibility with $\G$, it is necessary to design an appropriate filtering mechanism. The initial problem considered on the real dataset dictates the selection of the optimal option among the admissible envelopes of $\G$ that emerge through the filtering process.

Any filtering mechanism will lead to a specific envelope digraph that is suitable only for that particular case. Consequently, choosing admissible extensions randomly may not yield reliable results.

\section{Appendix}
\subsection{Cayley digraphs  \(\text{Cay}(\mathbb{Z}_N, \Gamma)\)}

Let \( \mathbb{Z}_N \) denote the cyclic group of integers modulo \( N \), where \( N \) is a positive integer. The group \( \mathbb{Z}_N \) consists of the elements \(\{ 0, 1, 2, \ldots, N-1 \}\) under addition modulo \( N \).

Given a subset \( \Gamma \subseteq \mathbb{Z}_N \), the {\it Cayley digraph} \(\text{Cay}(\mathbb{Z}_N, \Gamma)\) is a directed graph defined as follows:

\begin{itemize}
    \item The vertex set of the digraph is the set of elements of \(\mathbb{Z}_N\).
    \item There is a directed edge from vertex \( m \) to vertex \( n \) if and only if \( n - m \in 
    \Gamma\) (mod \( N \)).
\end{itemize}
In other words, the adjacency relation in the Cayley digraph \(\text{Cay}(\mathbb{Z}_N, \Gamma)\) is given by $(m, n)\in \mathbb{Z}_N \times \mathbb{Z}_N$ iff $n \equiv m + k \pmod{N}$  for some $k\in \Gamma$.

The set \( \Gamma \) is often referred to as the {\it connection set} of the Cayley digraph. It is important to note that \( \Gamma \) should not contain the identity element (0 in the case of \(\mathbb{Z}_N\)) to avoid loops at each vertex unless loops are desired in the digraph.

In the specific case $\Gamma=\{1\}$, the Cayley digraph \(\text{Cay}(\mathbb{Z}_N, \Gamma)\) simplifies to the directed cycle digraph with $N$ vertices whose adjacency matrix is just the  circulant  matrix $\mathbf{C}$. It is known  diagonalizable with the following eigenvalue decomposition. 
 \begin{equation}
 \label{circulant}
 \begin{cases}\mathbf{C}=\FF^{-1}\D\mathcal\FF \\
 \D=\text{diag}\Big(1,\exp{(-\frac{2\pi i}{N})},\ldots,\exp{(-\frac{2\pi i(N-1)}{N})}\Big)  
 \end{cases}    
 \end{equation}
Let  $\Gamma=\{n_1,\ldots, n_k\}$  be a connection set and $\A_{\Gamma}$ as the associative adjacency matrix for the Cayley graph \(\text{Cay}(\mathbb{Z}_N, \Gamma)\). Then one may directly check that,  
\begin{align*}
\A_\Gamma &= \sum_{q=1}^{k}\C^{n_q} = \sum_{q=1}^{k}(\FF^{-1}\D\FF)^{n_q} \\
&= \FF^{-1} \Big( \sum_{q=1}^{k} \D^{n_q} \Big) \FF
\end{align*}

The matrix $\D_{\Gamma} = \sum_{q=1}^{k} \D^{n_q}$ is a diagonal matrix whose diagonal entries are precisely the eigenvalues of $\A_{\Gamma}$. Consequently, $\A_{\Gamma} = \FF^{-1} \D_{\Gamma} \FF$ represents the eigenvalue decomposition of $\A_{\Gamma}$. This proves that the discrete Fourier matrix is the graph Fourier transform for all Cayley digraphs \(\text{Cay}(\mathbb{Z}_N, \Gamma)\).

\subsection{A brief description for metrics}
\subsubsection{PageRank Centrality} The PageRank algorithm, initially created by Google for ranking web pages, is used in network analysis to measure node importance based on incoming link structure. Higher scores indicate more central nodes. The algorithm distributes a probability value across the network, considering both the quantity and quality of incoming connections.
\[
PR(i) = \frac{1 - d}{N} + d \sum_{j \in M(i)} \frac{PR(j)}{L(j)}
\]
where:
\begin{itemize}
    \item \( PR(i) \) is the PageRank of node \( i \).
    \item \( d \) is the damping factor, usually set to 0.85.
    \item \( N \) is the total number of nodes in the graph.
    \item \( M(i) \) is the set of nodes that link to node \( i \).
    \item \( L(j) \) is the number of outbound links from node \( j \).
\end{itemize}
The Kendall rank correlation coefficient (Kendall's tau) measures similarity between two rankings. It evaluates ordinal association by calculating differences in item pair orderings.

The formula for Kendall's tau is:
\[{\small \tau = \frac{(\text{number of concordant pairs}) - (\text{number of discordant pairs})}{\binom{n}{2}} }\]

where:
\begin{itemize}
    \item \( \tau \) is Kendall's tau coefficient,
\item  \( n \) is the number of paired observations,
\item \( \binom{n}{2} \) represents the number of pairs of items.
\end{itemize}\medskip

\subsubsection{Motif Density} Motif Density quantifies the prevalence of recurring significant patterns or motifs within a graph structure. The formula for Motif Density can be expressed as:
\[
\text{Motif Density}(M) = \frac{n_M}{N_G}
\]
where:
\begin{itemize}
    \item \( n_M \) is the number of occurrences of motif \( M \) in the graph \( G \).
    \item \( N_G \) is the total number of potential occurrences of motif \( M \) in graph \( G \), considering its size and configuration.
\end{itemize}

This metric helps in understanding the structural importance and frequency of specific patterns within a network, providing insights into network dynamics and characteristics.

\subsubsection{ Core-Periphery Criterion:}
Core-Periphery counts evaluate how nodes within a graph can be categorized into core and periphery groups based on their connectivity patterns.
\begin{itemize}
    \item \textbf{Core Nodes}: These nodes have higher degrees of connectivity and play central roles within the network.
    \item \textbf{Periphery Nodes}: These nodes have fewer connections and are typically situated on the edges of the network.
\end{itemize}
The core-periphery structure score \( S_i \) for each node \( i \) is given by:
\[
S_i = \frac{k_i - \langle k \rangle}{k_i + \langle k \rangle}
\]
where:
\begin{itemize}
    \item \( k_i \) is the degree (number of connections) of node \( i \),
    \item \( \langle k \rangle \) is the average degree of all nodes in the graph.
\end{itemize}
Nodes with \( S_i > 0 \) are classified as core nodes, while nodes with \( S_i \leq 0 \) are classified as periphery nodes.

\subsubsection{Global Clustering Coefficient} The local clustering coefficient of a node in a graph measures the degree to which its neighbors are interconnected. It quantifies how close the node's neighbors are to forming a complete subgraph (clique), thereby indicating the local density of connections around the node.
The local clustering coefficient \( C_i \) for a node \( i \) is given by:
\[
C_i = \frac{2 \times e_i}{k_i \times (k_i - 1)}
\]
where:
\begin{itemize}
    \item \( k_i \) is the degree of node \( i \), i.e., the number of connections it has.
    \item \( e_i \) is the number of edges between the neighbors of node \( i \).
\end{itemize}

This formula computes the ratio of the actual number of edges between the neighbors of node \( i \) (denoted by \( e_i \)) to the maximum possible number of edges that could exist between them (given by \( k_i \times (k_i - 1) / 2 \), which is the number of edges in a complete graph of \( k_i \) nodes).

\bibliographystyle{IEEEtran}

\bibliography{ref.bib}

\end{document}